\documentclass[twocolumn]{autart}    

\usepackage{graphicx}          
\usepackage{mathtools}  
\usepackage{amssymb} 
\usepackage{mathrsfs}
\usepackage{natbib}
\usepackage[bookmarks=false]{hyperref}

\begin{document}

\begin{frontmatter}

\title{Distributionally Robust Stochastic MPC under Disturbance-Affine Feedback Policies}

\author[Aachen]{Xu Chen\thanksref{aut}}\ead{x.chen@irt.rwth-aachen.de}, 
\author[Aachen]{Lorenz Doerschel}\ead{l.doerschel@irt.rwth-aachen.de}
\address[Aachen]{Institute of Automatic Control, RWTH Aachen University, Campus-Boulevard 30, 52074 Aachen}     

\thanks[aut]{Corresponding author.}
\begin{keyword}                           
Predictive control; Stochastic model predictive control; Chance constraints; Stochastic systems; Constraint satisfaction problems; Optimization problems; Distributions                
\end{keyword}                             

\begin{abstract}                          
This study addresses the stochastic Model Predictive Control (MPC) problem for linear time-invariant systems subjected to unknown disturbance distributions. By leveraging the most recent disturbance data, we construct a set of distributions with similar statistical properties contained within a Wasserstein ball, thereby accounting for the worst-case impacts on constraint satisfaction. Numerous MPC strategies, particularly tube-based approaches, have been extensively studied under the Wasserstein ambiguity set, but these methods often introduce conservatism and can limit control performance. Unlike tube-based approaches, we adopt a disturbance-affine control strategy, which introduces additional control degrees of freedom. We begin by developing the Disturbance-Affine Distributionally Robust (DA-DR) MPC framework, subsequently reformulating the control problem into a tractable quadratic programming formulation. Furthermore, we establish the recursive feasibility and stability of the proposed MPC scheme. Finally, we present comprehensive theoretical analysis and simulation results, demonstrating the superiority of the DA-DR MPC over tube-based MPC in initial feasible sets, average performance, and state variance control. 
\end{abstract}

\end{frontmatter}

\section{Introduction}
In many real-world applications of MPC, the actual system is subject to external disturbances. If these disturbances are not accounted for in future predictions, the control actions adopted may become aggressive, potentially severely damaging the performance and safety of the closed-loop system. In such cases, Robust MPC effectively manages the propagation of uncertainties in predictions by considering the worst-case impacts of disturbances within a predefined compact set \cite{mayne2005robust}. However, in practice, disturbances often follow specific probability distributions, and worst-case scenarios typically occur with very low probabilities. Within the Robust MPC framework, control actions tend to be excessively conservative. Consequently, Stochastic MPC more naturally incorporates the distribution of disturbances by employing probabilistic descriptions to define chance constraints, thereby allowing constraint violations with a small probability \cite{li2002probabilistically,farina2016stochastic}. This approach effectively achieves a balance between safety and performance.\par
A significant challenge in Stochastic MPC lies in the formulation of chance constraints in relation to disturbance distributions. When the disturbance distribution is known, or its higher-order statistics are available, the boundaries of these constraints can be approximated through constraint tightening \cite{CANNON2009747,5599849}. However, such approximations are often conservative and rely on the assumption of time-invariant disturbance distributions. Furthermore, some approaches utilize samples of the disturbance data to perform scenario-based optimization \cite{6213078}. In these methods, the probabilistic guarantees for constraint satisfaction depend on the number of selected scenarios, and the theoretical bounds may require sample sizes that exceed practical requirements \cite{1632303,CALAFIORE20131861}. Consequently, the resulting high computational costs can render sample-based MPC methods difficult to implement in real-world applications.\par
By combining the aforementioned approaches, Distributionally Robust Optimization (DRO) serves as an excellent alternative solution framework, as it does not rely on the precise acquisition of the actual distribution. Instead, DRO considers the worst-case within an ambiguity set that encompasses a range of distributions exhibiting characteristics similar to the observed data \cite{rahimian2019distributionally}. Typically, there are two methods for constructing this ambiguity set: moment-based and distance-based methods. Moment-based methods can generally be reformulated into tractable expressions \cite{delage2010distributionally,zymler2013distributionally}. In the context of Stochastic MPC, moment-based methods have been extensively employed \cite{coppens2021data,lu2020soft}. However, such methods often assume that the distribution moments are accurately known, and the resulting solutions may be overly conservative \cite{wang2016likelihood}. Distance-based methods involve considering a family of distributions that are close to the nominal distribution under a specified distance metric. Among these, the Wasserstein distance is particularly prevalent due to its ability to ensure robust out-of-sample performance \cite{mohajerin2018data,zhao2018data}. Numerous studies have effectively integrated the Wasserstein distance with Stochastic MPC to handle soft constraints \cite{aolaritei2023wasserstein,mark2020stochastic,fochesato2022data,mark2021data,aolaritei2023capture}. However, most relevant works exclusively address tube-based strategies for linear systems with additive disturbances, wherein the system dynamics are partitioned into nominal and error dynamics. For the error dynamics, a predetermined feedback gain is applied to control the propagation of errors \cite{langson2004robust}. While this decomposition facilitates problem tractability, it also introduces a degree of conservatism, potentially resulting in suboptimal controller performance under constraint satisfaction. To overcome this limitation, we incorporate a disturbance-affine parameterization strategy \cite{4738806,goulart2006optimization}, thereby establishing our Disturbance-Affine Distributionally Robust MPC (DA-DR MPC). \par
In this work, we develop a tractable reformulation for the DA-DR MPC problem based on the Wasserstein metric. We introduce Conditional Value-at-Risk (CVaR) as an approximation for chance constraints and modify the disturbance-affine feedback strategy accordingly. Concurrently, we address the dual problem and transform it into general linear inequality constraints. This approach ultimately results in a tractable quadratic programming formulation. Additionally, we establish the recursive feasibility of the DA-DR MPC and prove its stability. Furthermore, we conduct a comparative analysis between DA-DR MPC and tube-based methods both theoretically and through simulations. The results show that DA-DR MPC outperforms tube-based MPC in terms of the initial state feasible set, state variance performance, and average performance. 

\section{Preliminaries}
\subsection{Notation}
Let $\mathbb{Z}_{[i,j]} $ denote the set of integers ranging  from $i$ to $j$. The symbol $\mathbb{S}_+$ ($\mathbb{S}_{++}$) denotes the set of positive semidefinite (positive definite) matrices. The notation $\mathrm {int}   (\mathcal{A} )$ represents the interior of the set $\mathcal{A}$. Furthermore, $\mathrm{gph}\ f$ refers to the graph of the function $f$ \cite{rockafellar2009variational}, the symbol $\otimes $ represents the Kronecker product, and the symbol $\oplus$ denotes the Minkowski addition. \par
\subsection{Wasserstein Distance and Ambiguity Set}
In this work, we restrict our focus to the space $\mathbb{R}^n $ and denote by $\mathscr{B}(\mathbb{R}^n)$ the Borel $\sigma$-algebra on  $\mathbb{R}^n $, which allows us to define measures on this measurable space. Although the constructed measure space is generally incomplete, attention is primarily given to sets with positive measures in most applications. Additionally, let $\mathcal{P}(\mathbb{R}^n )$ represent the set of Borel probability measures defined on $\mathbb{R}^n$. In certain cases, we aim to represent the set of probability measures defined on a compact set $\mathcal{K} \in \mathbb{R}^n $  as
\begin{equation}\label{probability_m}
\mathcal{P}(\mathcal{K} ) = \left\{ \mu \in \mathcal{P}(\mathbb{R}^n) : \mu(\mathcal{K}) = 1 \text{ and } \operatorname{supp}(\mu) = \mathcal{K}\right\},
\end{equation}
where the set comprises all Borel probability measures supported on the compact set $\mathcal{K} $.\par
Let $\mathcal{P}_p(\mathcal{K} ) \subseteq \mathcal{P}(\mathcal{K} )$ denote the set of probability measures that possess  a finite $p$-moment where $p \in [1,\infty )$. The $p$-Wasserstein distance with respect to  $\mathcal{P}_p(\mathcal{K} ) $ is then defined as 
\begin{equation}\label{wasserstein_distance}
W_p(\mu , \nu ) := \inf_{\gamma \in \mathcal{P}_p(\mathcal{K}\times\mathcal{K}) }  \left( \int_{\mathcal{K} ^2} d(\xi_1 - \xi_2)^p \gamma(d\xi_1, d\xi_2) \right)^{\frac{1}{p}},
\end{equation}
subject to the constraints that the marginals of $\gamma$ are $\mu$ and $\nu $, respectively. Here, the metric $d$ on the Euclidean space is typically the Euclidean distance, although other distance metrics can be specified as needed. To ensure that the infimum in \eqref{wasserstein_distance} is finite, the metric $d$ is required to be proper, level-bounded, and lower semicontinuous. The first two conditions are generally satisfied by the inherent properties of the distance. Besides, the Wasserstein distance quantifies the minimal effort required to transport a unit mass from the distribution $\mu$ to the distribution $\nu$, which is closely related to the optimal transport problem. \par
In practice, it is often challenging to accurately estimate the actual distribution, especially when the distribution is time-variant. To approximate the disturbance distribution, one approach is to collect recent data and formalize a nominal distribution, such as the empirical distribution defined by $\hat{\mathbb{P}}_N := \frac{1}{N} \sum_{i=1}^{N} \delta_{\hat{\xi}_i}$, based on the available data samples $\{\hat{\xi}_i\}$, where $\delta$ denotes the Dirac delta function. However, the actual distribution may deviate from this nominal distribution. The Wasserstein distance provides a natural way to measure the discrepancy between the two distributions, irrespective of whether they are discrete or continuous. To capture the potential disturbance distribution, we construct a Wasserstein ambiguity set as follows:
\begin{equation}\label{wasserstein_set}
\mathcal{B}_\epsilon(\hat{\mathbb{P}}_N) := \left\{ \nu \in \mathcal{P}_p(\mathcal{K}) : W_p(\nu, \hat{\mathbb{P}}_N) \leq \epsilon \right\},
\end{equation}
with the nominal distribution $\hat{\mathbb{P}}_N$ serves as the center of the ambiguity set. The selected ambiguity radius $\epsilon$, in conjunction with the sample size $N$, determines the probability that the true distribution is contained within $\mathcal{B}_\epsilon$.\par
We now introduce the fundamental duality theory from \cite{gao2023distributionally} in the context of DRO with respect to the Wasserstein ambiguity set centered around the nominal distribution $\hat{\mathbb{P}}_N$. This framework focuses on the impact of the worst-case within the ambiguity set on the objective function $h$. The primal and dual problems for the fixed $\epsilon$ can be formalized as follows:
\begin{equation}\label{prime_dual}
\begin{aligned}
v_P &:= \sup_{\nu \in \mathcal{P}_p(\mathcal{K})} \left\{  \mathbb{E}_\nu\left [ h(\xi  ) \right ]  : \nu \in \mathcal{B}_\epsilon(\hat{\mathbb{P}}_N)  \right\},\\
v_D &:= \inf_{\lambda \geq 0} \left[ \lambda \epsilon^p+ \frac{1}{N} \sum_{i=1}^{N} \sup_{\xi \in \mathcal{K} } \left[ h(\xi) - \lambda d(\xi,\hat{\xi}_i) ^p\right] \right],
\end{aligned}
\end{equation}
where the decision variables in $h$ are omitted for brevity. From the primal infinite-dimensional optimization problem, we derive a finite-dimensional min-max problem. The following theorem in \cite{gao2023distributionally} ensures that strong duality holds under the condition when $\mathcal{K}$ is a compact set.
\begin{thm} \label{zero_dual}
Strong duality in \eqref{prime_dual} holds, i.e., $v_P = v_D$, provided that the growth property $\kappa$ of the function $h$ with the exponent $p$ in \eqref{prime_dual} is bounded above. Specifically, the growth property is defined as 
\begin{equation}\label{growth properties}
\kappa := \limsup_{\xi \in \mathcal{K}:  d(\xi, \xi_0) \to \infty} \frac{h(\xi) - h(\xi_0)}{d(\xi, \xi_0)^p} < \infty,
\end{equation}
for some $\xi_0 \in \mathcal{K}$, where $h$ is upper semicontinuous in $\xi$ and $\epsilon > 0$. Notably, when $\mathcal{K}$ is bounded, $\kappa$ is conventionally  set to $0$.    
\end{thm}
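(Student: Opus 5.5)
The plan is to prove $v_P\le v_D$ (weak duality) by a Lagrangian bound, and then $v_P\ge v_D$ by reducing the primal to a finite-dimensional problem over distributions that move each sample $\hat\xi_i$ to at most two points. I will not assume $\mathcal{K}$ is compact: only $\kappa<\infty$, upper semicontinuity of $h$, and $\epsilon>0$ are used. Throughout, write $\Phi_i(\lambda):=\sup_{\xi\in\mathcal{K}}[h(\xi)-\lambda d(\xi,\hat\xi_i)^p]$ and $F(\lambda):=\lambda\epsilon^p+\frac1N\sum_i\Phi_i(\lambda)$, so that $v_D=\inf_{\lambda\ge0}F(\lambda)$.

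\emph{Weak duality.} Take $\nu\in\mathcal{B}_\epsilon(\hat{\mathbb{P}}_N)$ and a near-optimal coupling $\gamma$ with marginals $\hat{\mathbb{P}}_N$ and $\nu$; the conditional laws $\gamma_i$ of $\xi$ given $\hat\xi_i$ then satisfy $\frac1N\sum_i\int d(\xi,\hat\xi_i)^p\,\gamma_i(d\xi)\le\epsilon^p+\eta$ for any $\eta>0$. For each $\lambda\ge0$,
\begin{equation*}
\mathbb{E}_\nu[h]=\frac1N\sum_i\int h\,d\gamma_i\le\lambda(\epsilon^p+\eta)+\frac1N\sum_i\int\big[h(\xi)-\lambda d(\xi,\hat\xi_i)^p\big]\gamma_i(d\xi)\le\lambda(\epsilon^p+\eta)+\frac1N\sum_i\Phi_i(\lambda).
\end{equation*}
Letting $\eta\to0$ and taking the infimum over $\lambda$ gives $v_P\le v_D$. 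The triangle inequality for $d$ ensures that the growth condition \eqref{growth properties} at $\xi_0$ transfers to every $\hat\xi_i$ with the same $\kappa$.

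\emph{Properties of the dual function.} Each $\Phi_i$ is convex and nonincreasing in $\lambda$, being a supremum of affine functions. By $\kappa<\infty$, $\Phi_i(\lambda)<\infty$ for every $\lambda>\kappa$. Indeed, for large $d(\xi,\hat\xi_i)$ we have $h(\xi)\le h(\hat\xi_i)+(\kappa+\delta)d(\xi,\hat\xi_i)^p$. On the remaining bounded region, $h$ is bounded above: here I will assume, as in the standard setting, that bounded subsets of $\mathcal{K}$ are relatively compact, so that an u.s.c.\ function attains its supremum there. Moreover, $\Phi_i(\lambda)=+\infty$ for $\lambda<\kappa$, so $\inf_\lambda F$ is taken over $[\kappa,\infty)$. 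Since $\epsilon>0$, $F(\lambda)\to\infty$ as $\lambda\to\infty$, so $F$ is a proper, convex, coercive function and the infimum is attained at some $\lambda^*\ge\kappa$. If $v_D=+\infty$, I will instead exhibit primal distributions with unbounded value by pushing mass far out along a sequence with $h(\xi)/d^p\to\kappa$; this requires $\kappa=\infty$, which is excluded, so the finite case is the main one.

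\emph{Reverse inequality (the main obstacle).} Use the optimality condition $0\in\partial F(\lambda^*)$, i.e.\ $-\epsilon^p\in\frac1N\sum_i\partial\Phi_i(\lambda^*)$. By Danskin-type reasoning, $\partial\Phi_i(\lambda^*)$ is the interval $[-\overline{D}_i,-\underline{D}_i]$, where $\underline{D}_i$ and $\overline{D}_i$ are the smallest and largest values of $d(\xi,\hat\xi_i)^p$ over (approximate) maximizers of $h-\lambda^*d^p$. This yields, for each $i$, two points $\underline\xi_i,\overline\xi_i$ and a weight $q_i\in[0,1]$ such that $\frac1N\sum_i[q_i d(\underline\xi_i,\hat\xi_i)^p+(1-q_i)d(\overline\xi_i,\hat\xi_i)^p]=\epsilon^p$. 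Set $\nu^*=\frac1N\sum_i[q_i\delta_{\underline\xi_i}+(1-q_i)\delta_{\overline\xi_i}]$. Then $\nu^*\in\mathcal{B}_\epsilon$, and complementary slackness gives $\mathbb{E}_{\nu^*}[h]=\lambda^*\epsilon^p+\frac1N\sum_i\Phi_i(\lambda^*)=v_D$.

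The hard step is that the maximizers may fail to exist: with $\mathcal{K}$ unbounded they may escape to infinity, in particular when $\lambda^*=\kappa$. I will first handle this with $\eta$-maximizers and the one-sided derivatives of $\Phi_i$, obtaining distributions with $\mathbb{E}_\nu[h]\ge v_D-O(\eta)$. When $\lambda^*=\kappa$ and the budget cannot be exhausted by finite points, I will move a vanishing mass $\theta$ to a point $\xi$ with $d(\xi,\hat\xi_i)^p\approx(\text{residual budget})/\theta$ and $h(\xi)\gtrsim\kappa d^p$. This recovers the missing $\kappa\cdot(\text{residual})$ term as $\theta\to0$. Together these give $v_P\ge v_D$.
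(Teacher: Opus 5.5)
The paper does not prove this statement. It imports it from \cite{gao2023distributionally} and only ever uses it with $\mathcal{K}$ compact. Your proposal is a self-contained reconstruction of essentially the argument of that reference, specialized to an empirical center. It has three ingredients: Lagrangian weak duality; a primal measure read off from the one-sided derivatives of the dual function, with each atom split between at most two maximizers of $h-\lambda^\ast d(\cdot,\hat\xi_i)^p$; and, when $\lambda^\ast=\kappa$, a vanishing mass sent to infinity along a sequence realizing the growth rate.

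Specializing to $N$ atoms gives you an explicit, finitely supported worst case. It has at most $2N$ points, which switching atoms one at a time from $\underline\xi_i$ to $\overline\xi_i$ sharpens to $N+1$, as in the cited work. It also removes the disintegration and measurability technicalities of a general nominal measure, while the citation buys generality. In the paper's compact setting ($\kappa=0$), the argmax sets of $h-\lambda d(\cdot,\hat\xi_i)^p$ are nonempty and compact for every $\lambda\ge0$. Your Danskin formula holds there with $h$ merely upper semicontinuous, so your ``hard step'' with escaping maximizers never arises.

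The one place your write-up needs a patch is $\lambda^\ast=0$.
\begin{itemize}
\item Because $\lambda\ge0$ is a constraint, the optimality condition there is $F'_+(0)\ge0$, not $0\in\partial F(0)$.
\item Weights $q_i$ making the transport cost exactly $\epsilon^p$ need not exist, e.g.\ when $\epsilon$ exceeds the diameter of $\mathcal{K}$.
\item The fix is to move each $\hat\xi_i$ to a nearest maximizer of $h$. The condition $F'_+(0)\ge0$ says exactly that this costs at most $\epsilon^p$. Since the term $\lambda^\ast\epsilon^p$ vanishes, $\mathbb{E}_{\nu^\ast}[h]=\sup_{\mathcal{K}}h=F(0)$.
\end{itemize}

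Finally, your finitely supported $\nu^\ast$ is admissible only under the reading $\operatorname{supp}\mu\subseteq\mathcal{K}$ of \eqref{probability_m}. The literal condition $\operatorname{supp}\mu=\mathcal{K}$ would exclude $\hat{\mathbb{P}}_N$ itself, so $\operatorname{supp}\mu\subseteq\mathcal{K}$ is clearly the intended meaning.
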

\subsection{Chance Constraint and Its Approximation}
In stochastic MPCs, problems are often subject to chance constraints of the form
\begin{equation} \label{chance-constraint}
\mathbb{P}_\nu(g(x, \xi) \leq 0) \geq 1 - \alpha,
\end{equation}
where $g $ represents a constraint function dependent on the decision variable $x$ and the disturbance $\xi$, and $\alpha$ denotes the allowable violation probability. While regular chance constraint problems are considered to be NP-hard \cite{qiu2014covering}, addressing distributionally robust frameworks remains even more challenging. They are also known to be NP-hard because treating the disturbance set as a singleton reduces the problem to the general case. In this work, we employ CVaR to approximate the chance constraints, thereby facilitating a tractable reformulation \cite{xie2021distributionally}. CVaR is defined as 
\begin{equation} \label{CVaR}
\mathrm {CVaR}_\alpha^{\nu} \left (g(x,\xi )\right ) :=  \inf_{t \in \mathbb{R} }\left [ \alpha^{-1}\mathbb{E}_\nu\left [ \left ( g(x,\xi) +t\right )_+  \right ] -t   \right ],
\end{equation}
where the function $(\cdot ) _+$ defined as $\max\left \{ \cdot ,0 \right \} $. Besides, CVaR quantifies the expected constraint violations occurring beyond the probability threshold $1-\alpha$. 

\section{Problem Formulation}
We consider a discrete-time linear time-invariant system defined by
\begin{equation}\label{system}
x_{k+1} = A x_k + B u_k + D w_k,
\end{equation}
where $x \in \mathbb{R}^{n_x} $ represents the state, $u \in \mathbb{R}^{n_u} $ is the control input, and $w \in \mathbb{R}^{n_w}$ denotes the disturbance. The disturbance $w$ follows an unknown distribution, which may be non-stationary and can change over time. In this work, we assume that its support is compact and forms a convex polytopic set, defined as
\begin{equation}\label{W}
\mathcal{W}:=  \left \{ w \in \mathbb{R}^{n_w}: Hw\le  h\right \} ,
\end{equation}
where $H \in \mathbb{R}^{n_h \times n_w} $, and the partial order is induced by the non-negative orthant. Additionally, the system is subject to affine state constraints
\begin{equation}\label{X}
x \in \mathcal{X}:=   \left \{ x\in \mathbb{R}^{n_x} : Fx\le  f\right \} ,
\end{equation}
where  $F \in \mathbb{R}^{n_f \times n_x} $, and affine input constraints
\begin{equation}\label{U}
u \in \mathcal{U}:=   \left \{ u  \in \mathbb{R}^{n_u}: Gu\le  g\right \} ,
\end{equation}
with  $G \in \mathbb{R}^{n_g \times n_u} $. All matrices $H$, $F$ and $G$ are tall matrices with full rank. \par
In this setting, we address the following discrete-time stochastic optimal control problem:
\begin{equation} \label{PF}
\begin{aligned}
\min_{\mathbf{u}_k} \quad &\mathbb{E}_{ \mathbf{w}_k} \left\{ \sum_{i=0}^{N_h-1}  l\left( \phi_{i|k}(x_k, \mathbf{u}_k, \mathbf{w}_k),u_{i|k}\right) +V_f(\phi_{N_h|k}) \right\}\\
\text{s.t.} \quad &\mathbb{P}_{\mathbf{w}_k}(F \phi_{i|k}(x_k, \mathbf{u}_k, \mathbf{w}_k) \leq f) > 1- \alpha, \ \forall i \in \mathbb{Z}_{[1,N_h]}  \\
&G u_{i|k} \leq g, \  \forall i \in \mathbb{Z}_{[0,N_h-1]} 
\end{aligned}
\end{equation}
where $x_k$ is the system state at time instant $k$, and $N_h$ denotes the prediction horizon. The parameter $\alpha \in \left [ 0,1 \right ) $ indicates the allowable violation probability, with $\alpha = 0$ corresponding to hard constraints. The control input sequence over the prediction horizon is denoted by $\mathbf{u}_k = 
\begin{bmatrix}
u_{0|k}^\mathrm{T}  & \dots & u_{N_h-1|k}^\mathrm{T}
\end{bmatrix}^\mathrm{T}$ while $\mathbf{w}_k = 
\begin{bmatrix}
w_{0|k}^\mathrm{T}  & \dots & w_{N_h-1|k}^\mathrm{T}
\end{bmatrix}^\mathrm{T}$ represents the realization of the disturbance sequence within the prediction, assumed to be i.i.d. with zero means in this work. The constant term can be eliminated by defining a shifted variable. The mapping $\phi$ describes
the state trajectory under the system dynamic constraints in \eqref{system} with the initial state $x_k$. Additionally, $l$ and $V_f$ denote the stage cost and terminal cost, respectively, both of which will be specified in detail later. It is worth noting that, considering real-world conditions, all the input constraints are treated as hard constraints. 
\section{Parameterized Disturbance-Affine Strategy}
In previous discussions on Wasserstein distance-based MPC, tube-based strategies were predominantly employed, enabling the separation of nominal dynamics and error dynamics. For the open-loop optimization strategy, a fixed feedback gain is typically adopted to ensure that the error system matrix is a Schur matrix, which guarantees the contractive behavior of the disturbance propagation. While tubed-based strategies facilitate the tractable formulation of chance constraints, they often introduce conservatism, resulting in a limited initial state feasible set. In this work, to mitigate such conservatism while maintaining computational scalability, we adopt a disturbance-affine control strategy. At each prediction instant $k+i$, the control policy depends on the realizations of disturbances from $k$ to $k+i-1$, increasing the control degree of freedom. Furthermore, we parameterize the control policy using linear control law with respect to the disturbances, which helps to reduce complexity. Specifically, the input sequence over the prediction horizon consists of a gain matrix applied to disturbances to manage the uncertainty propagation, and a bias term to control the nominal behavior, expressed as
\begin{equation} \label{uk}
    \mathbf{u}_k = \mathbf{K}_k \mathbf{w}_k + \mathbf{c}_k,
\end{equation}
\begin{equation} \label{Kk}
\mathbf{K}_k = 
\begin{bmatrix}
0 & 0 & \cdots & 0 & 0 \\
K_{0,1|k} & 0 & \cdots & 0 & 0 \\
K_{0,2|k} & K_{1,2|k} & \cdots & 0 & 0 \\
\vdots & \vdots & \ddots & \vdots & \vdots \\
K_{0,N_h-1|k} & K_{1,N_h-1|k} & \cdots & K_{N_h-2,N_h-1|k} & 0
\end{bmatrix},
\end{equation}
where $\mathbf{c}_k= 
\begin{bmatrix}
c_{0|k}^\mathrm{T}  & \dots & c_{N_h-1|k}^\mathrm{T}
\end{bmatrix}^\mathrm{T}$ is the feedforward control sequence, and both $\mathbf{K}_k$ and $\mathbf{c}_k$ are decision variables that paramiterize the control strategy. Additionally, this expression is equivalent to a state feedback control of the form $\mathbf{u}_k = \mathbf{\tilde{K}}_k \mathbf{x}_k + \mathbf{\tilde{c}}_k$, which can be interchanged by using Youla parameterization \cite{van2002conic}. However, the state feedback control term is nonlinear on the decision variables, as $\mathbf{x}_k$ is itself a function of these variables, thereby introducing non-convexity into the optimization. Thus, we adopt the strategy in \eqref{uk}, which affinely depends on disturbances. \par
Subsequently, the predicted state sequence can be expressed as $\mathbf{x}_k= 
\begin{bmatrix}
x_{1|k}^\mathrm{T}  & \dots & x_{N_h|k}^\mathrm{T}
\end{bmatrix}^\mathrm{T}$, in terms of the control input sequence  $\mathbf{u}_k$ as follows:
\begin{equation} \label{xk}
    \mathbf{x}_k = L_x x_k + L_u \mathbf{c}_k + L_u \mathbf{K}_k \mathbf{w}_k + L_w \mathbf{w}_k ,
\end{equation}
where the matrices $L_x \in \mathbb{R}^{n_xN_h \times n_x}  $, $L_u \in \mathbb{R}^{n_xN_h \times n_uN_h} $ and $L_w \in \mathbb{R}^{n_xN_h \times n_wN_h} $ are defined in the Appendix.

\section{Chance Constraint Reformulation}
Next, we focus on the reformulation of the chance constraints. Here, we consider the probability of constraint satisfaction over the whole prediction horizon. By substituting \eqref{xk} into the chance constraints in \eqref{PF}, we can express the constraints as
\begin{equation} \label{con_xk}
\mathbb{P}_{\mathbf{w}_k }(\mathbf{F}  L_x x_k + \mathbf{F}  L_u \mathbf{c}_k + \mathbf{F}  L_u \mathbf{K}_k \mathbf{w}_k + \mathbf{F} L_w \mathbf{w}_k   \leq \mathbf{f}) \geq 1- \alpha,
\end{equation}
where $\mathbf{F} \in \mathbb{R}^{n_f N_h \times n_x N_h} $ is defined as $\mathbf{F}:=\mathrm{diag(F,\dots ,F)} $, and $\mathbf{f} \in \mathbb{R}^{n_f N_h}$ is defined as $\mathbf{f}:=\left [ f^\mathrm{T}  \dots f^\mathrm{T}\right ]^\mathrm{T}$. We denote $\mathbf{F}_j$ as the $j$-th row of the matrix $\mathbf{F}$ for $j\in \mathbb{Z}_{\left [1, n_f N_h\right ] }  $ and define a finite index set $\mathbb{I}_F := \mathbb{Z}_{\left [1, n_f N_h\right ] }  $. Since the inequality must hold for all $j \in \mathbb{I}_F$, this is equivalent to requiring that the inequality holds for the maximum value of the function over the index set. Then, the chance constraints can be written as
\begin{equation} \label{chance_con_prime}
\begin{aligned}
 &\mathbb{P}_{\mathbf{w}_k }(\mathrm{max}_{j\in\mathbb{I}_F } \ \mathbf{F}_j  L_x x_k + \mathbf{F}_j  L_u \mathbf{c}_k + \mathbf{F}_j  L_u \mathbf{K}_k \mathbf{w}_k\\
   & + \mathbf{F}_j L_w \mathbf{w}_k  - \mathbf{f}_j \leq 0) \geq 1-\alpha.
\end{aligned}
\end{equation}
From this constraint, we observe that the left-hand side of the inequality is a piecewise affine function in both the disturbance $\mathbf{w}_k$ and the decision variables. At this stage, the decision variables are still difficult to handle in chance constraint reformulation due to the matrix form of $\mathbf{K}_k$. To further simplify the problem, we draw upon the method in \cite{chen2023gasoline} to transform the decision variables into a dense vector form. Consequently, the function on the left-hand side can be expressed as 
\begin{equation} \label{chance_con}
\begin{aligned}
&g( \mathbf{c}_k, \mathbf{v}_k, \mathbf{w}_k)  := \\&\mathrm{max}_{j\in\mathbb{I}_F } \ \mathbf{F}_{xj}  x_k + \mathbf{F}_{uj} \mathbf{c}_k + \mathbf{v}_k^{\mathrm {T}   }\mathbf{V}_j^{\mathrm {T}   }\mathbf{w}_k  + \mathbf{F}_{wj} \mathbf{w}_k  - \mathbf{f}_j,
\end{aligned}
\end{equation}
where $\mathbf{F}_{xj}:= \mathbf{F}_j  L_x$, $\mathbf{F}_{uj}:= \mathbf{F}_j  L_u$, $\mathbf{F}_{wj}:= \mathbf{F}_j  L_w$, and $\mathbf{v}_k^{\mathrm {T}   }\mathbf{V}_j^{\mathrm {T}   } = \mathbf{F}_j  L_u \mathbf{K}_k$. Here, $\mathbf{V}_j \in \mathbb{R}^{n_wN_h\times \frac{1}{2} n_wn_uN_h(N_h-1) } $ and $\mathbf{v}_k \in \mathbb{R}^{ \frac{1}{2} n_wn_uN_h(N_h-1) } $ in \cite{chen2023gasoline}. The vectors $\mathbf{v}_k$ and $\mathbf{c}_k$ are both decision variables representing the feedback term and the feedforward term, respectively. \par
Next, we consider the following distributionally robust chance constraint within a Wasserstein ambiguity set, derived from \eqref{chance_con}, defined as
\begin{equation}  \label{robust_chance_con}
\sup_{\nu \in \mathcal{B }_\epsilon }\mathbb{P}_\nu \left(g(\mathbf{c}_k, \mathbf{v}_k, \mathbf{w}_k)\leq 0\right) \geq 1-\alpha.
\end{equation}
By utilizing dual representation and the theory of zero duality gap in Theorem~\ref{zero_dual}, we can derive an equivalent formulation for the chance constraint. However, despite obtaining a theoretically equivalent expression, the resulting problem remains challenging to solve in practice due to its inherent nonconvexity. Therefore, we approximate the original chance constraint in \eqref{robust_chance_con} by employing a probability constraint with CVaR in \eqref{CVaR}, which facilitates a tractable formulation. When the CVaR condition is satisfied, it implies that the original chance constraint also holds. Consequently, the DR chance constraint can be rewritten as
\begin{equation}  \label{CVaR_chance_con_full}
\sup_{\nu \in \mathcal{B }_\epsilon } \inf_{t \in \mathbb{R} }\left [ \alpha^{-1}\mathbb{E}_\nu\left [ \left ( g(\mathbf{c}_k, \mathbf{v}_k, \mathbf{w}_k)+t \right )_+  \right ] -t   \right ] \leq 0.
\end{equation}
Next, we treat the function pointwise with respect to the decision variables. Without loss of generality, we first omit the decision variables and define the function 
\begin{equation}  \label{function_h}
\begin{aligned}
  h(t,\mathbf{w}_k )&:= \left (\tilde{g}(\mathbf{w}_k)+t \right )_+  -\alpha t \\
  & = \max \left \{  \tilde{g}(\mathbf{w}_k)+(1-\alpha)t, -\alpha t\right \} ,
\end{aligned}
\end{equation} 
where $\tilde{g}(\mathbf{w}_k)$ encapsulates the dependence of the constraint function on the disturbance sequence $\mathbf{w}_k$.\par

In the subsequent proof, we need to exchange the order of min and max in \eqref{CVaR_chance_con_full}. We begin by considering the following theorem from \cite{shapiro2002minimax,hota2019data}.
\begin{thm} \label{min-max}
Consider the min-max stochastic optimization problem
\begin{equation} \label{min-max_problem}
    \min_{x \in S}  \sup_{\mu \in \mathcal{A} } \mathbb{E}_\mu \left[ g(x, \xi ) \right] ,
\end{equation}
where $S\subseteq  \mathbb{R} ^n$ is a closed and convex subset, $\mathcal{A}$ is a non-empty set of probability measures on measurable space $(\Xi, \mathcal{B}(\Xi))$, and $g:\mathbb{R}^n\times  \Xi\to \mathbb{R}$ is a real-valued function.\par
Let $\bar{x}$ be an optimal solution of \eqref{min-max_problem}. Suppose that the following conditions hold: 1. the set $\mathcal{A}$ is tight and closed in the weak topology. 2. There is a convex neighborhood $V$ of $S$ such that for all $x \in V$ and $\mu \in \mathcal{A}$ the function $g(x,\cdot)$ is measurable and integrable, the supremum  $p(x):= \sup_{\mu \in \mathcal{A} } \mathbb{E}_\mu \left[ g(x, \xi ) \right] < \infty$, and for all $\xi \in \Xi$ the function $g(\cdot,\xi)$ is convex on $V$. 3. For every $x$ in the neighborhood of $\bar{x}$, the function $g(x,\cdot)$ is bounded and upper semicontinuous on $\Xi$, and the function $g(\bar{x},\cdot)$ is bounded and continuous. \par
Then, there exists $\bar{\mu}$ such that $(\bar{x},\bar{\mu})$ is a saddle point of \eqref{min-max_problem} and the optimal values of \eqref{min-max_problem} and its dual are equal. 
\end{thm}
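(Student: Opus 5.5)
This statement is a minimax theorem of Shapiro type, so I would prove it with Sion's theorem together with a compactness argument. The natural object is the bilinear-in-$\mu$, convex-in-$x$ function $\Phi(x,\mu):=\mathbb{E}_\mu[g(x,\xi)]$ on $S\times\mathcal{A}$.

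\textbf{Step 1: topology and basic properties.} By Prokhorov's theorem, condition 1 makes $\mathcal{A}$ weakly compact. I also take $\mathcal{A}$ convex: convexity is implicit in the application to Wasserstein balls, and Sion's theorem needs it; otherwise I would pass to the closed convex hull, which does not change $p(x)$ because $\Phi(x,\cdot)$ is affine. For each $\xi$, convexity of $g(\cdot,\xi)$ on $V$, with $V$ convex and open, together with integrability in condition 2 gives that $\Phi(\cdot,\mu)$ is convex and finite on $V$. Hence $\Phi(\cdot,\mu)$ is continuous on the interior and $p(x)=\sup_\mu\Phi(x,\mu)$ is a finite convex function on $V$. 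For $x$ near $\bar x$, condition 3 makes $g(x,\cdot)$ bounded and upper semicontinuous. Approximating from above by bounded continuous functions, the Portmanteau theorem then shows that $\mu\mapsto\Phi(x,\mu)$ is weakly upper semicontinuous, and it is affine, hence concave. For $x=\bar x$ the map $\mu\mapsto\Phi(\bar x,\mu)$ is weakly continuous.

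\textbf{Step 2: localize and apply Sion.} Since $\bar x$ minimizes the convex function $p$ over the closed convex set $S$, it also minimizes $p$ over the set $S_r:=S\cap \bar B(\bar x,r)$, for small $r>0$ such that $\bar B(\bar x,r)$ lies in the neighbourhood of condition 3. The set $S_r$ is compact and convex. On $S_r\times\mathcal{A}$, $\Phi$ is convex and lower semicontinuous (indeed continuous) in $x$, and concave and upper semicontinuous in $\mu$, with $\mathcal{A}$ compact. Sion's theorem then gives
\[
\min_{x\in S_r}\sup_{\mu\in\mathcal{A}}\Phi(x,\mu)=\sup_{\mu\in\mathcal{A}}\min_{x\in S_r}\Phi(x,\mu).
\]
The left side equals $p(\bar x)$, which is the optimal value of \eqref{min-max_problem}. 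On the right, $\mu\mapsto\min_{x\in S_r}\Phi(x,\mu)$ is an infimum of upper semicontinuous functions, hence upper semicontinuous on the compact set $\mathcal{A}$, so the supremum is attained at some $\bar\mu$.

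\textbf{Step 3: the saddle point and removal of localization.} From Step 2, $\min_{x\in S_r}\Phi(x,\bar\mu)=p(\bar x)\ge\Phi(\bar x,\bar\mu)$, which forces $\Phi(\bar x,\bar\mu)=p(\bar x)$. So $\Phi(\bar x,\mu)\le\Phi(\bar x,\bar\mu)$ for all $\mu$, and $\bar x$ minimizes $\Phi(\cdot,\bar\mu)$ over $S_r$. Because $\Phi(\cdot,\bar\mu)$ is convex on $S$, a local minimizer is global: for $x\in S$, the points $\bar x+\theta(x-\bar x)$ lie in $S_r$ for small $\theta$, and convexity transfers the inequality. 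Thus $(\bar x,\bar\mu)$ is a saddle point on $S\times\mathcal{A}$, and consequently
\[
\min_{x\in S}\sup_{\mu\in\mathcal{A}}\Phi=\sup_{\mu\in\mathcal{A}}\inf_{x\in S}\Phi,
\]
which is equality of the primal and dual optimal values.

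\textbf{Main obstacle.} I expect the main obstacle to be Step 1's semicontinuity in $\mu$ under the weak topology. Condition 3 is exactly what is needed there, but only locally near $\bar x$, which is why the localization in Step 2 is required. A second delicate point is the convexity of $\mathcal{A}$. I would handle it by the convex-hull remark above, checking that the closed convex hull stays tight and weakly closed: mixtures of tight families are tight.
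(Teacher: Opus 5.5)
The paper does not prove this theorem; it imports it from the cited references, so the comparison is with the argument given there. That argument is, in essence, first order at $\bar x$. Optimality of $\bar x$ for the convex max-function $p$ over $S$ gives $0\in\partial p(\bar x)+N_S(\bar x)$. A Danskin/Levin--Valadier-type formula then expresses $\partial p(\bar x)$ through the set of maximizers of $\mu\mapsto\mathbb{E}_\mu[g(\bar x,\xi)]$, which is nonempty, compact and, for convex $\mathcal{A}$, convex. From this one extracts a single maximizer $\bar\mu$ at which $\bar x$ satisfies the optimality condition of $\min_{x\in S}\mathbb{E}_{\bar\mu}[g(x,\xi)]$. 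This is where the local semicontinuity hypotheses near $\bar x$ enter.

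You instead apply Sion's theorem once on the compact convex set $S_r=S\cap\bar B(\bar x,r)$, and then pass from local to global minimality by convexity. Your argument is correct once $\mathcal{A}$ is convex, and it is more elementary. It needs no subdifferential calculus, and it uses only upper semicontinuity of $g(x,\cdot)$ near $\bar x$; the continuity of $g(\bar x,\cdot)$ is never invoked. Because the compact side in Sion is $S_r$, weak compactness of $\mathcal{A}$ is needed only to attain the supremum of the upper semicontinuous map $\mu\mapsto\min_{x\in S_r}\Phi(x,\mu)$. The first-order route gives more structure in return: a description of $\partial p(\bar x)$, and a characterization of $\bar\mu$ as a maximizer at $\bar x$ satisfying a first-order condition.

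One correction concerns convexity of $\mathcal{A}$. It is not merely a convenience for Sion: the statement as printed, without convexity, is false. Take $\Xi=S=[-1,1]$, $\mathcal{A}=\{\delta_{-1},\delta_{1}\}$ and $g(x,\xi)=x\xi$. Every listed hypothesis holds, and $\bar x=0$ gives primal value $\min_{x\in S}|x|=0$. The dual value, however, is $\max\{\inf_{x\in S}x,\ \inf_{x\in S}(-x)\}=-1$, so no saddle point exists in $S\times\mathcal{A}$. Convexity must therefore be added as a hypothesis. This is harmless for the paper, since Wasserstein balls are convex, so Lemma~\ref{lem_maxmin} is unaffected. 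You should drop your closed-convex-hull fallback. It yields $\bar\mu\in\overline{\mathrm{conv}}\,\mathcal{A}$ and equality with the dual taken over the hull, not over $\mathcal{A}$. In the example it produces $\bar\mu=\frac{1}{2}(\delta_{-1}+\delta_{1})\notin\mathcal{A}$, with hull dual value $0\neq-1$.
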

Based on Theorem~\ref{min-max}, we derive the following lemma, which serves as a critical step in simplifying and analyzing the problem further.
\begin{lem} \label{lem_maxmin}
Consider a Wasserstein ambiguity set $\mathcal{B }_\epsilon$ with respect to the empirical distribution and the constraint function defined in \eqref{chance_con}. For any $\mathbf{c}_k$ and $\mathbf{v}_k$, the following equality holds:
\begin{equation}  \label{CVaR_chance_con_full_dual}
\begin{aligned}
    &\sup_{\nu \in \mathcal{B }_\epsilon } \inf_{t \in \mathbb{R} }\left [ \alpha^{-1}\mathbb{E}_\nu\left [ \left ( g(\mathbf{c}_k, \mathbf{v}_k, \mathbf{w}_k)+t \right )_+  \right ] -t   \right ] \\
    = & \inf_{t \in \mathbb{R} } \sup_{\nu \in \mathcal{B }_\epsilon }\left [ \alpha^{-1}\mathbb{E}_\nu\left [ \left ( g(\mathbf{c}_k, \mathbf{v}_k, \mathbf{w}_k)+t \right )_+  \right ] -t   \right ] 
\end{aligned}
\end{equation}
\end{lem}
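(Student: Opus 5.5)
We apply Theorem~\ref{min-max} with $x=t$, $S=\mathbb{R}$, $\mathcal{A}=\mathcal{B}_\epsilon$, $\Xi=\mathcal{W}^{N_h}$, and integrand
\[
\alpha^{-1}h(t,\mathbf{w}_k)=\alpha^{-1}\left(g(\mathbf{c}_k,\mathbf{v}_k,\mathbf{w}_k)+t\right)_+-t,
\]
with $\mathbf{c}_k,\mathbf{v}_k$ held fixed. Theorem~\ref{min-max} asserts a saddle point, hence equality of $\inf_t\sup_\nu$ and $\sup_\nu\inf_t$, which is exactly \eqref{CVaR_chance_con_full_dual}. The ordering $\sup\inf\le\inf\sup$ is automatic, so all the content lies in the reverse inequality; this is what the saddle point supplies. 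The proof therefore reduces to verifying conditions 1--3 of Theorem~\ref{min-max} and showing that an optimal $\bar t$ exists.

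\textbf{Condition 1.} Since $\mathcal{W}$ is a compact polytope, $\Xi=\mathcal{W}^{N_h}$ is compact. Every family of probability measures on a compact set is tight, so $\mathcal{B}_\epsilon$ is tight. For weak closedness we use that on a compact space $W_p$ metrizes weak convergence, so $\nu\mapsto W_p(\nu,\hat{\mathbb{P}}_N)$ is weakly continuous and its sublevel set $\mathcal{B}_\epsilon$ is weakly closed. One subtlety is the definition \eqref{probability_m}, which requires $\operatorname{supp}(\mu)=\mathcal{K}$. We would interpret $\mathcal{P}_p(\mathcal{K})$ as the measures concentrated on $\mathcal{K}$, i.e.\ $\mu(\mathcal{K})=1$, since a full-support requirement is not weakly closed. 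If needed, we would note that the suprema over the two sets coincide by density.

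\textbf{Condition 2.} For fixed $\mathbf{w}_k$, the map $t\mapsto(\tilde g+t)_+/\alpha-t$ is a maximum of affine functions, hence convex on $V=\mathbb{R}$. For fixed $t$, the function $g$ is piecewise affine in $\mathbf{w}_k$, hence continuous. It is therefore Borel measurable and bounded on the compact set $\Xi$, so it is integrable, and $p(t)\le\alpha^{-1}(\max_\Xi|g|+|t|)+|t|<\infty$.

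\textbf{Condition 3.} Boundedness and continuity in $\mathbf{w}_k$ hold for every $t$, in particular near $\bar t$. This follows because the composition of the continuous functions $g$, $(\cdot)_+$ and affine maps is continuous on the compact set $\Xi$.

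\textbf{Existence of a minimizer $\bar t$.} This is the step we expect to be the main obstacle, because Theorem~\ref{min-max} presupposes that an optimal solution exists. Let $M=\max_\Xi|g|$. The function $p(t)$ is convex, being a supremum of convex functions, and finite, hence continuous. For $t\le -M$ the positive part vanishes, so $p(t)=-t$, which is decreasing in $t$. For $t\ge M$ we have $(g+t)_+=g+t$, so $p(t)=(\alpha^{-1}-1)t+\alpha^{-1}\sup_\nu\mathbb{E}_\nu[g]$, which is nondecreasing because $\alpha\in(0,1)$. Hence the infimum of $p$ over $\mathbb{R}$ equals its infimum over the compact interval $[-M,M]$, where it is attained by continuity, giving $\bar t$. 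The same argument shows the infimum over $t$ can be restricted to a compact interval uniformly in $\nu$.

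The case $\alpha=0$ has to be excluded or treated separately, since CVaR is undefined there. For that case we would note that the constraint is then the robust worst-case constraint, for which the exchange is trivial in the limit.
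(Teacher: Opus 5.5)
Your proposal is correct and takes the same route as the paper. You verify the three hypotheses of Theorem~\ref{min-max} for the Wasserstein ball and the CVaR integrand, then show that $p(t)$ attains its infimum; your observation that $p(t)=-t$ for $t\le -M$ and that $p$ is nondecreasing for $t\ge M$ is a concrete version of the paper's level-coercivity argument $\liminf_{|t|\to\infty}p(t)/|t|=\min\{\alpha,1-\alpha\}$. Your write-up is slightly more careful than the paper's: you state convexity in $t$ (the paper writes convexity in $\mathbf{w}_k$), you establish continuity on all of $\mathcal{W}^{N_h}$ rather than only its interior, and you flag the full-support issue in \eqref{probability_m}, which the paper does not address.
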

\begin{pf}
To prove Lemma~\ref{lem_maxmin}, we verify that the conditions of Theorem~\ref{min-max} are satisfied. 1. The Wasserstein ambiguity set is tight and closed in the weak topology, as proved in \cite{santambrogio2015optimal} by invoking Prokhorov's theorem. 2. Consider the whole space $\mathbb{R}$, for every $t \in \mathbb{R}$, the function $h(t,\cdot)$ defined in \eqref{function_h} is measurable, and by Theorem~\ref{zero_dual}, zero duality gap holds, and the primal problem has a finite optimal value. Specifically, for any $\nu \in \mathcal{B }_\epsilon$, the function is $\nu$-integrable. Furthermore, for all feasible $\mathbf{w}_k$, the function in \eqref{function_h} is also convex in $\mathbf{w}_k$.  The convexity is preserved under the maximization. 3. For any $t \in \mathbb{R}$, the function $h(t, \cdot)$ is a piecewise affine function, which is bounded on the compact set $\mathcal{K}^{N_h} $ and continuous on $\mathrm{int}(\mathcal{K}^{N_h} ) $. Additionally, we need to verify the existence of the infimum of \eqref{min-max_problem} for $t \in \mathbb{R}$. We define the function $p(t) := \sup_{\nu \in \mathcal{B}_\epsilon } \mathbb{E}_\nu\left [ h(t,\mathbf{w}_k ) \right ]  $. Since the function $p(\cdot)$ attains a finite optimum for every $t \in \mathbb{R}$, the function itself is proper. Moreover, for each 
$\nu \in \mathcal{B}_\epsilon$, the function $h$ is lower semicontinuous (lsc) in $t$ because it is continuous. Consequently, the supremum of $h$ over the ambiguity set is also lsc \cite{rockafellar2009variational}. Lastly, we want to show the level-boundedness of the function $p$. To establish this, we refer to \eqref{function_h} and show that $\liminf_{ |  t |\to \infty  } \frac{p(t)}{|  t |} =\min\left \{ \alpha,1-\alpha \right \} $. This finite limit infimum indicates that the function $p$ is level-coercive, which implies that it is level-bounded. Since $p$ is proper, lsc, and level bounded, we can guarantee the attainment of a minimum with a finite value.  \hfill $\square$
\end{pf}
Next, we specify the settings for the ambiguity set. We consider a finite-dimensional normed space equipped with the $L_1$-norm, $\left ( \mathbb{R}^{n_wN_h},\left \| \cdot \right \|_1   \right ) $. In this context,  the set $\mathcal{K}$ is defined as

    $ \mathcal{W}^{N_h}=\left \{  \mathbf{w}_k \in\mathbb{R}^{n_wN_h}: I_{N_h\times N_h}\otimes H \mathbf{w}_k \leq  I_{N_h\times 1}\otimes h\right\}$   .

Additionally, we consider a set of probability measures with the moment $p=1$. Then, the Wasstein distance is given by
\begin{equation}
W(\mu , \nu ) = \inf_{\gamma \in \mathcal{P}_1(\mathcal{K}\times\mathcal{K}) }  \int_{\mathcal{K} ^2} \left \| \xi_1 - \xi_2\right \|_1  \gamma(d\xi_1, d\xi_2) .
\end{equation}

\begin{prop} \label{CVaR_LP}
Consider the function $g( \mathbf{c}_k, \mathbf{v}_k, \mathbf{w}_k)$ as defined in \eqref{chance_con}. Suppose that the Wasserstein distance is defined over the set $\mathcal{W}^{N_h}$ with the moment $p=1$, and the distance metric is induced by the $L_1$-norm. Then, the CVaR approximation of chance constraints in \eqref{CVaR_chance_con_full} can be expressed as linear inequality constraints of the form
\begin{equation} \label{results}
    \begin{aligned}
        &\lambda \epsilon+ \frac{1}{N} \sum_{i=1}^{N}  s_i  -\alpha t\leq0, \quad \lambda \geq 0, \quad n_{ji}\geq 0,\\
        & \hat{\mathbf{w}}_{ki}^{\mathrm{T} }\mathbf{V}_j\mathbf{v}_k + \hat{\mathbf{w}}_{ki}^{\mathrm{T} }\mathbf{F}_{wj}^{\mathrm{T}}-\hat{\mathbf{w}}_{ki}^{\mathrm{T} }(I_{N_h\times N_h}\otimes H)^{\mathrm{T} }n_{ji},\\
&+(I_{N_h\times 1}\otimes h)^{\mathrm{T} }n_{ji}+\mathbf{F}_{xj}  x_k + \mathbf{F}_{uj} \mathbf{c}_k- \mathbf{f}_j + t \leq s_i,\\
& \mathbf{V}_j\mathbf{v}_k+\mathbf{F}_{wj}^{\mathrm{T}}-(I_{N_h\times N_h}\otimes H)^{\mathrm{T} }n_{ji} -\lambda \leq 0,  
    \end{aligned}
\end{equation}
where $s_i$ and $n_{ji}$ are newly introduced variables. Here, $\hat{\mathbf{w}}_{ki}$ denotes the $i$-th sample of the disturbance realization. All inequalities above hold for every $i \in \mathbb{N}_{[1,N]} $ and $j\in\mathbb{I}_F$. Additionally, the last inequalities are interpreted in the partial order induced by the non-negative orthant. 
\end{prop}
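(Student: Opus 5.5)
We begin from the CVaR constraint \eqref{CVaR_chance_con_full} and invoke Lemma~\ref{lem_maxmin} to exchange the supremum over $\mathcal{B}_\epsilon$ with the infimum over $t$. Multiplying by $\alpha>0$ and using the function $h(t,\mathbf{w}_k)$ in \eqref{function_h}, the constraint becomes $\inf_{t}\sup_{\nu\in\mathcal{B}_\epsilon}\mathbb{E}_\nu[h(t,\mathbf{w}_k)]\le 0$. Since the infimum over $t$ is attained (as shown in the proof of Lemma~\ref{lem_maxmin}), this is equivalent to the existence of some $t\in\mathbb{R}$ with $\sup_{\nu}\mathbb{E}_\nu[(g+t)_+]-\alpha t\le 0$. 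This existential statement lets us treat $t$ as an additional decision variable.

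Next we apply Theorem~\ref{zero_dual} with $p=1$, $d=\|\cdot\|_1$ and $\mathcal{K}=\mathcal{W}^{N_h}$ compact. The function $(g+t)_+$ is a maximum of affine functions, hence continuous, so $\kappa=0$ and strong duality holds. The worst-case expectation then equals
\begin{equation*}
\inf_{\lambda\ge0}\Big[\lambda\epsilon+\tfrac1N\sum_{i=1}^N\sup_{\mathbf{w}\in\mathcal{W}^{N_h}}\big[(g(\mathbf{w})+t)_+-\lambda\|\mathbf{w}-\hat{\mathbf{w}}_{ki}\|_1\big]\Big].
\end{equation*}
We introduce epigraph variables $s_i$ for the inner suprema, which gives the first line of \eqref{results}. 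Since $(g+t)_+=\max\{0,\max_j(\cdot)+t\}$, each supremum splits into a maximum over the pieces $j\in\mathbb{I}_F$ together with the zero piece. The zero piece only contributes $s_i\ge 0$ (its supremum is $0$, attained at $\mathbf{w}=\hat{\mathbf{w}}_{ki}$). I expect this constraint to be implicit or absorbed, and I will check whether it needs to be listed. For each $j$ it therefore suffices to require
\begin{equation*}
\sup_{\mathbf{w}\in\mathcal{W}^{N_h}}\big[a_j^{\mathrm T}\mathbf{w}+b_j+t-\lambda\|\mathbf{w}-\hat{\mathbf{w}}_{ki}\|_1\big]\le s_i,
\end{equation*}
where $a_j=\mathbf{V}_j\mathbf{v}_k+\mathbf{F}_{wj}^{\mathrm T}$ and $b_j=\mathbf{F}_{xj}x_k+\mathbf{F}_{uj}\mathbf{c}_k-\mathbf{f}_j$.

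The main step, and the expected obstacle, is converting this inner supremum into finitely many linear inequalities. We write the $\ell_1$ penalty via its dual norm, $-\lambda\|\mathbf{z}\|_1=\min_{\|\mathbf{q}\|_\infty\le\lambda}\mathbf{q}^{\mathrm T}\mathbf{z}$, and substitute $\mathbf{z}=\mathbf{w}-\hat{\mathbf{w}}_{ki}$. The inner problem is then a concave–convex saddle problem over a compact polytope, so we can swap sup and min. We then apply LP duality to the remaining linear maximization over the polytope $(I_{N_h\times N_h}\otimes H)\mathbf{w}\le I_{N_h\times1}\otimes h$, introducing multipliers $n_{ji}\ge0$. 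This yields the support-function bound $(I\otimes h)^{\mathrm T}n_{ji}$, subject to $(I\otimes H)^{\mathrm T}n_{ji}=a_j-\mathbf{q}$ with $\|\mathbf{q}\|_\infty\le\lambda$. Eliminating $\mathbf{q}$ gives
\begin{equation*}
\|a_j-(I\otimes H)^{\mathrm T}n_{ji}\|_\infty\le\lambda,
\end{equation*}
together with the affine inequality in the second line of \eqref{results}, in which $\hat{\mathbf{w}}_{ki}$ enters through the shift.

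Comparing with \eqref{results}, only the one-sided bound $a_j-(I\otimes H)^{\mathrm T}n_{ji}-\lambda\le0$ appears. I would try to recover it by relaxing the equality to the inequality natural to the polytope dual, or by observing that one side of the $\infty$-norm bound can be absorbed by adjusting $n_{ji}$. If neither argument closes, the statement must be read as a conservative (sufficient) reformulation. Since the CVaR constraint is already only an approximation of the chance constraint, a conservative linear reformulation still suffices for constraint satisfaction. Finally, all resulting constraints are linear in $(\mathbf{c}_k,\mathbf{v}_k,t,\lambda,s_i,n_{ji})$, which completes the argument.
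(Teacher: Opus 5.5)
Your derivation follows the paper's proof step for step: Lemma~\ref{lem_maxmin} to swap $\sup$ and $\inf$, Theorem~\ref{zero_dual} for duality, epigraph variables $s_i$, splitting $(\cdot)_+$, the dual-norm form of $-\lambda\|\cdot\|_1$, a minimax swap, and LP duality. Up to the bound $\|a_j-(I_{N_h\times N_h}\otimes H)^{\mathrm T}n_{ji}\|_\infty\le\lambda$ it is correct, and this two-sided bound is exactly what the paper obtains in its step (4).

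\textbf{One-sided bound.} The gap is your fallback that the one-sided inequality printed in \eqref{results} is a ``conservative (sufficient)'' reformulation; this is backwards. Dropping the half $-(a_j-(I\otimes H)^{\mathrm T}n_{ji})\le\lambda\mathbf{1}$ enlarges the feasible set, so the one-sided system underestimates the inner supremum. It is implied by the CVaR constraint but does not imply it. Concretely, take $\mathcal{K}=[-1,1]$ (so $H=(1,-1)^{\mathrm T}$, $h=(1,1)^{\mathrm T}$), $\hat{\mathbf w}=0$, $a=-10$, $\lambda=1$. The true value is $\sup_{|w|\le 1}(-10w-|w|)=9$. Yet $n=0$ satisfies $a-H^{\mathrm T}n\le\lambda$ and certifies the value $0$.

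Neither of your repair ideas works. The equality $(I\otimes H)^{\mathrm T}n_{ji}=a_j-q$ is forced because $\mathbf w$ is sign-free in the inner LP; an inequality would only be valid under $\mathbf w\ge 0$. The example also shows the missing side cannot be absorbed by adjusting $n_{ji}$. Appealing to CVaR being an approximation does not help either: the chain needs the linear system to imply the CVaR constraint, and a relaxation does not. The correct endpoint is the two-sided bound $\pm(\mathbf V_j\mathbf v_k+\mathbf F_{wj}^{\mathrm T}-(I\otimes H)^{\mathrm T}n_{ji})\le\lambda\mathbf 1$. The one-sided form in the statement is an omission relative to the paper's own derivation, not something a conservatism argument can justify.

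\textbf{The constraint $s_i\ge 0$.} You hoped this was ``implicit or absorbed''; it is not implied by the other constraints. Without it the system is vacuous. For any $(\mathbf c_k,\mathbf v_k)$, take $n_{ji}=0$, $\lambda=\max_j\|a_j\|_\infty$ and $s_i=\max_j(\hat{\mathbf w}_{ki}^{\mathrm T}a_j+b_j)+t$. Then every constraint except the first holds. The first reduces to $(1-\alpha)t\le-\lambda\epsilon-\frac{1}{N}\sum_i\max_j(\hat{\mathbf w}_{ki}^{\mathrm T}a_j+b_j)$, which holds for $t$ sufficiently negative since $\alpha<1$. So the constraint $s_i\ge 0$ from the zero piece of $(\cdot)_+$ must be listed explicitly. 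The paper's proof records it right after \eqref{max.chan}, even though the statement omits it.
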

\begin{pf}
We first consider the CVaR constraints in \eqref{CVaR_chance_con_full}. Then,
\begin{equation}
\begin{aligned}
&\sup_{\nu \in \mathcal{B }_\epsilon } \inf_{t \in \mathbb{R} }\left [\alpha^{-1} \mathbb{E}_\nu\left [ \left ( \tilde{g}(\mathbf{w}_k)+t \right )_+  \right ] -t   \right ] \leq 0\\
\underset{(1)  }{\Leftrightarrow}   &\inf_{t \in \mathbb{R} }\sup_{\nu \in \mathcal{B }_\epsilon }\left [ \mathbb{E}_\nu\left [ \left ( \tilde{g}(\mathbf{w}_k)+t \right )_+  \right ] -\alpha t   \right ] \leq 0\\
\underset{(2)  }{\Leftrightarrow} &\inf_{t \in \mathbb{R} }\inf_{\lambda \geq 0} [ \lambda \epsilon+ \frac{1}{N} \sum_{i=1}^{N} \sup_{\mathbf{w}_k \in \mathcal{K} } [ \left (\tilde{g}(\mathbf{w}_k)+t \right )_+ \\
&- \lambda \left \| \mathbf{w}_k-\hat{\mathbf{w}}_{ki} \right \|_1 ]   ]-\alpha t\leq0\\
\underset{(3)  }{\Leftrightarrow} & \lambda \epsilon+ \frac{1}{N} \sum_{i=1}^{N}  s_i  -\alpha t\leq0, \quad \lambda \geq 0,\\
&\sup_{\mathbf{w}_k \in \mathcal{K} } \left[ \left (\tilde{g}(\mathbf{w}_k)+t \right )_+ - \lambda \left \| \mathbf{w}_k-\hat{\mathbf{w}}_{ki} \right \|_1 \right] \leq s_i, \\
& \quad\quad\quad\quad\quad\quad\quad\quad\quad
\quad\quad\quad\quad\quad\forall i\in \mathbb{Z}_{[1,N]} 
\end{aligned}
\end{equation}
The equivalence $(1)$, pertaining to the interchange of the maximum and minimum operations, is established by Lemma~\ref{lem_maxmin}. The equivalence $(2)$ follows from the dual formulation with a zero duality gap, as presented in Theorem~\ref{zero_dual}. The final equivalence $(3)$ arises from the introduction of the slack variable $s_i$. Subsequently, we focus on the last inequality condition, which involves two nested maximization problems—one over an index set and the other over the compact set $\mathcal{K}$.  It can be readily verified that this inequality is equivalent to 
\begin{equation} \label{max.chan}
\begin{aligned}
\max \{  &\sup_{\mathbf{w}_k \in \mathcal{K} } \tilde{g}(\mathbf{w}_k)+t - \lambda \left \| \mathbf{w}_k-\hat{\mathbf{w}}_{ki} \right \|_1 , \\ 
&\underset{=0 }{\underbrace{\sup_{\mathbf{w}_k \in \mathcal{K} }- \lambda \left \| \mathbf{w}_k-\hat{\mathbf{w}}_{ki} \right \|_1 }} 
  \}   \leq s_i
\end{aligned}
\end{equation}
By substituting \eqref{chance_con} into the first term in \eqref{max.chan}, we can obtain
\begin{equation} 
\begin{aligned}
&\sup_{\mathbf{w}_k \in \mathcal{K} } \tilde{g}(\mathbf{w}_k)+t - \lambda \left \| \mathbf{w}_k-\hat{\mathbf{w}}_{ki} \right \|_1 \leq s_i\\
\Leftrightarrow & \sup_{\mathbf{w}_k\in \mathcal{K} }\max_{j\in\mathbb{I}_F }  \mathbf{F}_{xj}  x_k + \mathbf{F}_{uj} \mathbf{c}_k + \mathbf{v}_k^{\mathrm {T}   }\mathbf{V}_j^{\mathrm {T}   }\mathbf{w}_k  + \mathbf{F}_{wj} \mathbf{w}_k  - \mathbf{f}_j \\
&\quad \quad\quad\quad\quad\quad+t - \lambda \left \| \mathbf{w}_k-\hat{\mathbf{w}}_{ki} \right \|_1 \leq s_i\\
\Leftrightarrow& \max_{j\in\mathbb{I}_F } \mathbf{F}_{xj}  x_k + \mathbf{F}_{uj} \mathbf{c}_k- \mathbf{f}_j + \sup_{\mathbf{w}_k\in \mathcal{K} }\mathbf{v}_k^{\mathrm {T}   }\mathbf{V}_j^{\mathrm {T}   }\mathbf{w}_k  + \mathbf{F}_{wj} \mathbf{w}_k \\
&\quad \quad\quad\quad\quad\quad - \lambda \left \| \mathbf{w}_k-\hat{\mathbf{w}}_{ki} \right \|_1 \leq s_i -t\\
\underset{(4)  }{\Leftrightarrow} & \inf_{n_{ji}\geq 0,\left \| \mathbf{V}_j\mathbf{v}_k+\mathbf{F}_{wj}^{\mathrm{T}}-(I_{N_h\times N_h}\otimes H)^{\mathrm{T} }n_{ji} \right \|_\infty  \leq\lambda}\hat{\mathbf{w}}_{ki}^{\mathrm{T} }\mathbf{V}_j\mathbf{v}_k \\
+ &\hat{\mathbf{w}}_{ki}^{\mathrm{T} }\mathbf{F}_{wj}^{\mathrm{T}}-\hat{\mathbf{w}}_{ki}^{\mathrm{T} }(I_{N_h\times N_h}\otimes H)^{\mathrm{T} }n_{ji}+(I_{N_h\times 1}\otimes h)^{\mathrm{T} }n_{ji}\\
+&\mathbf{F}_{xj}  x_k + \mathbf{F}_{uj} \mathbf{c}_k- \mathbf{f}_j + t \leq s_i, \forall j\in\mathbb{I}_F,
\end{aligned}
\end{equation}
where $s_i \geq 0$ due to the second term in \eqref{max.chan}. Besides, the equivalence $(4)$ comes from 
\begin{equation}
\begin{aligned}
&\sup_{\mathbf{w}_k\in \mathcal{K} }\mathbf{v}_k^{\mathrm {T}   }\mathbf{V}_j^{\mathrm {T}   }\mathbf{w}_k  + \mathbf{F}_{wj} \mathbf{w}_k- \lambda \left \| \mathbf{w}_k-\hat{\mathbf{w}}_{ki} \right \|_1\\
\underset{(5)  }{=}  &  \sup_{\mathbf{w}_k\in \mathcal{K} }\mathbf{v}_k^{\mathrm {T}   }\mathbf{V}_j^{\mathrm {T}   }\mathbf{w}_k  + \mathbf{F}_{wj} \mathbf{w}_k+ \inf_{\left \| m_{ji} \right \|_\infty  \leq\lambda} m_{ji}^{\mathrm{T} }(\hat{\mathbf{w}}_{ki} - \mathbf{w}_k)\\
\underset{(6)  }{=}  &\inf_{\left \| m_{ji} \right \|_\infty  \leq\lambda}m_{ji}^{\mathrm{T} }\hat{\mathbf{w}}_{ki} + \\
&\inf_{n_{ji} \geq0, (I_{N_h\times N_h}\otimes H)^{\mathrm{T} }n_{ji}=\mathbf{V}_j\mathbf{v}_k+\mathbf{F}_{wj}^{\mathrm{T}}-m_{ji}} (I_{N_h\times 1}\otimes h)^{\mathrm{T} }n_{ji}
\end{aligned}
\end{equation}
The equality $(5)$ stems from the definition of the dual norm. Specifically, for the $L_1$-norm, the corresponding dual norm is the $L_\infty$-norm. Moreover, this equality leverages the relationship between supremum and infimum problems, i.e., $\inf_{x \in X} f(x) = - \sup_{x \in X}  - f(x) $. The equality $(6)$  is derived from the dual formulation of the linear programming problem $\sup_{\mathbf{w}_k\in \mathcal{K} }\mathbf{v}_k^{\mathrm {T}   }\mathbf{V}_j^{\mathrm {T}   }\mathbf{w}_k  + \mathbf{F}_{wj} \mathbf{w}_k-  m_{ji}^{\mathrm{T} }\mathbf{w}_k$, where dual variables $n_{ji}$ are introduced. \hfill $\square$
\end{pf}
According to Proposition~\ref{CVaR_LP}, the chance constraints in \eqref{CVaR_chance_con_full} can be transformed into tractable linear inequalities. In the next section, we consider an objective function of the quadratic form. This ultimately leads to a standard quadratic programming problem that can be solved efficiently.

\section{Properties of DA-DR MPC}
In this section, we aim to demonstrate the properties of DA-DR MPC, such as recursive feasibility and stability. We first establish recursive feasibility for MPC by adding additional constraints. Furthermore, we prove that the long-term average cost is upper bounded.
\subsection{Recursive Feasibility}
In the context of stochastic MPC, ensuring recursive feasibility constitutes a significant challenge. When dealing with probability distributions with unbounded support, such as the Gaussian distribution, achieving recursive feasibility becomes generally unattainable because the worst-case scenarios remain unbounded despite the low probability of such extreme events in light-tailed distributions. This limitation motivates our approach to restrict the support of disturbances to a bounded set. Additionally, within the original problem in \eqref{PF}, considering all disturbance realizations within a bounded support does not inherently guarantee recursive feasibility, as the constraints are satisfied only with a specified probability not for all possible valid realizations. Therefore, it is necessary to modify the original problem by incorporating additional constraints to ensure recursive feasibility. To address this issue, we first introduce $t$-step-ahead constraints.
\begin{defn} \label{n_step_constraints}
At the time instant $k$, $t$-step-ahead constraints require that,  for any realization of the disturbance over the first $t$ time steps, the constraints defined in \eqref{PF} must be satisfied from time $k+t+1$ to time $k+t+N_h$ with the predefined probability $1-\alpha$. 
\end{defn}
 These additional constraints ensure that future states remain feasible for all disturbance realizations within the first $t$ step from the current instant $k$. Furthermore, we introduce a terminal set as detailed in the following assumption. \par
\begin{assum} \label{terminal_set_1}
There exists a terminal set $\mathcal{X}_f$ and a linear feedback gain $K_f$, such that the following conditions are satisfied:
\begin{equation}  \label{terminal_set}
\begin{aligned}
 &(A+BK_f)\mathcal{X}_f \oplus D\mathcal{W} \subseteq \mathcal{X}_f,\\ 
&K_f\mathcal{X}_f \subseteq \mathcal{U}, \quad \mathcal{X}_f \subseteq \mathcal{X}.    
\end{aligned}
\end{equation}
\end{assum}
We aim to determine the maximal robust positively invariant (RPI) set for the terminal set, adhering to the system dynamics described in \eqref{system} and subject to the hard constraints specified in \eqref{X} and \eqref{U}. The construction follows the methodology presented in \cite{kouvaritakis2016model}.\par
Within the framework of DA-DR MPC, we construct the following candidate solution at time instant $k+1$:
\begin{equation} \label{Kk+1}
    \tilde{\mathbf{K}}_{k+1} := 
\begin{bmatrix}
0 & 0 & \cdots & 0 & 0 \\
K_{1,2|k}^{\ast}  & 0 & \cdots & 0 & 0 \\
K_{1,3|k}^{\ast} & K_{2,3|k}^{\ast} & \cdots & 0 & 0 \\
\vdots & \vdots & \ddots & \vdots & \vdots \\
K_{1,N_h-1|k}^{\ast} & K_{2,N_h-1|k}^{\ast} & \cdots & 0 & 0\\
K_fA^{N_h-2}D&K_fA^{N_h-3}D &\cdots & K_fD &0
\end{bmatrix}.
\end{equation}
In the matrix $ \tilde{\mathbf{K}}_{k+1}$, the final row represents the feedback w.r.t. the error at the $N_h-2|k+1$-step prediction, and we define the feedforward term as follows:
\begin{equation} \label{ck+1}
\tilde{\mathbf{c}}_{k+1}:= 
\begin{bmatrix}
c_{1|k}^{\ast }+K_{0,1|k}^{\ast }w_k\\
\vdots \\
c_{N_h-1|k}^{\ast }+K_{0,N_h-1|k}^{\ast }w_k\\
K_f\bar{x}_{N_h-1|k+1} 
\end{bmatrix},
\end{equation}
where the last component is the feedback w.r.t. the nominal state $\bar{x}_{N_h-1|k+1} := A^{N_h} x_k + A^{N_h-1} D w_k + A^{N_h-1} B c_{0|k}^{\ast } + \cdots + B c_{N_h-1|k}^{\ast }$. Furthermore, the candidate solution for 2-step-ahead prediction can be formulated analogously.\par
Next, we address the following consideration: by introducing additional $1$-step-ahead constraints in Definition~\ref{n_step_constraints} to the original problem described in \eqref{PF} with the CVaR reformulation in \eqref{CVaR_chance_con_full}, denoted as $P_0$, we formulate an extended problem referred to as $P_1$. The general definition of $P_t$ is provided below.
\begin{defn} \label{P_n}
At time instant $k$, the original problem described in \eqref{PF} is designated as $P_0$. The problem $P_t$ is constructed by sequentially adding constraints from $1$-step-ahead up to $t$-step-ahead.
\end{defn}
At time step $k$, provided that $P_1$ is initially feasible, the solution of $P_1$ is characterized by the matrices $\mathbf{K}^{\ast}_{k}$ and $\mathbf{c}^{\ast}_{k}$. We then consider the candidate solution as defined in \eqref{Kk+1} and \eqref{ck+1}. At the subsequent time step $k+1$, for any realization of the disturbance, this candidate continues to satisfy all constraints with the given probability. However, it is important to note that this candidate solution may not necessarily be the solution to $P_1$, as there is no guarantee that it satisfies the $1$-step-ahead constraints in $P_1$. Consequently, we cannot sufficiently guarantee that $P_1$ is always feasible. Similarly, for the problem $P_t$, we cannot ensure that the corresponding candidate solution for $t$-step-ahead prediction is the solution to $P_t$. Through induction, we conclude that we need to define a problem $P_\infty$ to sufficiently ensure that $P_\infty$ is recursively feasible. The construction of recursive feasibility is similar to the approach in the related work \cite{kouvaritakis2010explicit}, although the starting points are different. When hard terminal set constraints in Assumption ~\ref{terminal_set} are introduced at prediction step $N_h$, it suffices to consider only the problem $P_{N_h}$. This is because all constraints are robustly satisfied for all $i \geq N_h$. \par
Next, we examine the $t$-step-ahead constraints in greater detail for $t \in \mathbb{Z}_{[1,N_h]} $.  Given that we impose hard constraints on the terminal set at time step $N_h|k$, it is necessary to adjust the matrices $L_x$, $L_u$, and $L_w$ by removing the last row, which corresponds to the state $x_{N_h|k}$. However, to maintain notational simplicity, we retain the original expressions for $L_x$, $L_u$, and $L_w$. For the matrix $\mathbf{F}\in \mathbb{R}^{n_f(N_h-1) \times n_x(N_h-1)}$ as defined in \eqref{con_xk}, our focus is exclusively on the chance constraints from step $t$ onward. Specifically, we consider only the last $N_h-1-t$ blocks of $\mathbf{F}$, denoted as $\tilde{\mathbf{F}}_t \in \mathbb{R}^{n_f(N_h-1-t) \times n_x(N_h-1)} $. The corresponding index set for these constraints is denoted by $\mathbb{I}_{\tilde{\mathbf{F}}} := \mathbb{Z}_{\left [1, n_f(N_h-1-t)\right ] }$. Moreover, it is imperative to account for the impact of the first $t$ disturbance realizations on each chance constraint, which can be expressed as follows:
\begin{equation} \label{fb_con}
\begin{aligned}
    &\max_{\mathbf{w}_{[0,t-1]} \in \mathcal{W}^{t}}\tilde{\mathbf{F}}_{tj} L_x x_k +\tilde{\mathbf{F}}_{tj}  L_u \mathbf{c}_k +  \tilde{\mathbf{F}}_{tj}  L_u \mathbf{K}_k \mathbf{w}_k\\
   & +  \tilde{\mathbf{F}}_{tj} L_w \mathbf{w}_k  -  \tilde{\mathbf{f}}_{tj} \leq 0 \\
\Leftrightarrow \ 
& \tilde{\mathbf{F}}_{xj}x_k+\tilde{\mathbf{F}}_{uj} \mathbf{c}_k + \mathbf{v}_k^{\mathrm {T}   }\tilde{\mathbf{V}}_j^{(2)\mathrm {T}   }\mathbf{w}_{[t,N-1]}+\tilde{\mathbf{F}}_{wj}^{(2)} \mathbf{w}_{[t,N-1]}\\
&-\tilde{\mathbf{f}}_{tj}+\left ( I_{t\times 1}\otimes h \right ) ^{\mathrm{T} }y_{tj}\leq 0, \quad y_{tj} \geq  0,\\
&\left( I_{t\times t}\otimes h \right ) ^{\mathrm{T} }y_{tj}- \tilde{\mathbf{V}}_j^{(1)  }\mathbf{v}_k-\tilde{\mathbf{F}}_{wj}^{(1)\mathrm{T}} \geq 0,\\
\end{aligned}
\end{equation}
The intermediate steps of the proof can be found in the Appendix. Furthermore, we observe that only the first inequality in the reformulated form in \eqref{fb_con} is related to the uncertainty after step $t$. Consequently, this inequality is treated as a chance constraint with respect to uncertainty. The subsequent steps about the chance constraint reformulation follow a similar pattern as the previously described process. Then, we conclude with the recursive feasibility of DA-DR MPC in the following proposition. 
\begin{prop} \label{fb}
Suppose that Assumption~\ref{terminal_set_1} holds and serves as the terminal constraint. If problem $P_N$ in Definition~\ref{P_n} is formulated accordingly, then DA-DR MPC is recursively feasible.
\end{prop}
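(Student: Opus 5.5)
We argue by the standard shifted-candidate construction. Assume $P_{N_h}$ is feasible at time $k$ with optimizer $(\mathbf{K}^\ast_k,\mathbf{c}^\ast_k)$, and let $w_k\in\mathcal{W}$ be the realized disturbance. We show that the candidate $(\tilde{\mathbf{K}}_{k+1},\tilde{\mathbf{c}}_{k+1})$ in \eqref{Kk+1}--\eqref{ck+1} is feasible for $P_{N_h}$ at time $k+1$ from $x_{k+1}=Ax_k+Bc^\ast_{0|k}+Dw_k$. Recursive feasibility then follows by induction on $k$.

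\textbf{Step 1: the candidate reproduces the tail of the optimal policy.} Substitute the candidate into \eqref{xk} at time $k+1$. We check that for every $\mathbf{w}_{k+1}$ the predicted states satisfy
\begin{equation*}
x_{i|k+1}=x_{i+1|k}\big|_{w_{0|k}=w_k,\ w_{l|k}=w_{l-1|k+1}}, \qquad i\in\mathbb{Z}_{[0,N_h-1]}.
\end{equation*}
This holds because the feedforward terms $c^\ast_{i|k}+K^\ast_{0,i|k}w_k$ absorb the dependence on the first disturbance, and the remaining strictly lower-triangular blocks are exactly the shifted blocks of $\mathbf{K}^\ast_k$. For the last stage, the terminal law is applied as $u_{N_h-1|k+1}=K_f x_{N_h-1|k+1}$. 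The last row of \eqref{Kk+1} together with the term $K_f\bar{x}_{N_h-1|k+1}$ is meant to reproduce exactly this law, and we will verify that it is affine in $\mathbf{w}_{k+1}$ with the required causal structure.

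\textbf{Step 2: the chance and $t$-step-ahead constraints.} The $t$-step-ahead constraints of $P_{N_h}$ at time $k$, for $t=1$, require the constraints from step $2$ onward to hold with probability $1-\alpha$, in the CVaR/DR sense of Proposition~\ref{CVaR_LP}, for every realization of the first disturbance. Take the realization to be $w_k$. By Step 1, these are exactly the $0$-step (original) CVaR constraints of the candidate at time $k+1$. Likewise, the $(t+1)$-step-ahead constraints at time $k$ imply the $t$-step-ahead constraints at time $k+1$ for $t\le N_h-1$, since the maximization over $\mathbf{w}_{[0,t]}$ in \eqref{fb_con} contains the case $w_{0|k}=w_k$.

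\textbf{Step 3: the remaining constraints at time $k+1$.} Two sets of constraints are not inherited and must be handled separately.
\begin{itemize}
\item The $N_h$-step-ahead constraint at time $k+1$ and the terminal constraint at step $N_h|k+1$. Here we use Assumption~\ref{terminal_set_1}. The hard terminal constraint at time $k$ gives $x_{N_h|k}\in\mathcal{X}_f$ for all disturbances. Then $(A+BK_f)\mathcal{X}_f\oplus D\mathcal{W}\subseteq\mathcal{X}_f$ gives $x_{N_h|k+1}\in\mathcal{X}_f\subseteq\mathcal{X}$ robustly. The condition $K_f\mathcal{X}_f\subseteq\mathcal{U}$ makes the new input hard constraint hold.
\item The robust (hard) satisfaction of all constraints beyond the horizon. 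This makes every chance constraint involving steps $\ge N_h$ hold with probability one, and hence also in the CVaR worst-case sense. Indeed, for a function that is nonpositive on the whole support $\mathcal{W}^{N_h}$, the CVaR in \eqref{CVaR_chance_con_full} is $\le 0$ for every $\nu\in\mathcal{B}_\epsilon$; take $t=0$.
\end{itemize}
Input hard constraints at the first $N_h-1$ stages are inherited from time $k$ because they were imposed for all $\mathbf{w}_k\in\mathcal{W}^{N_h}$.

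\textbf{Main obstacle.} The delicate point is that the DR-CVaR constraint at time $k+1$ is built from a new empirical distribution, with samples $\hat{\mathbf{w}}_{k+1,i}$, whereas the guarantee at time $k$ was for the old ambiguity set. To handle this, we would argue that the $t$-step-ahead constraints of $P_{N_h}$ are in fact robust in the sense of \eqref{fb_con}: the first $t$ disturbances are maximized over $\mathcal{W}^t$. The constraint beyond the horizon horizon-shift is covered by the terminal invariance. Hence the remaining probabilistic constraint has the same structure and a feasible certificate $(\lambda,s_i,n_{ji},t)$ can be rebuilt. If this fails for a general data update, we will fall back on a weaker statement: recursive feasibility for a fixed (or nested) ambiguity set, where the certificate variables of Proposition~\ref{CVaR_LP} carry over unchanged.
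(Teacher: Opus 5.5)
Your route is the same as the paper's: shift the optimal policy, let the $(t{+}1)$-step-ahead constraints at $k$ supply the $t$-step-ahead ones at $k+1$, and close the tail with the invariance of $\mathcal{X}_f$. The paper's proof is only the one-line assertion that this candidate is feasible. However, your first attempt at the ``main obstacle'' does not work. Robustifying $\mathbf{w}_{[0,t-1]}$ in \eqref{fb_con} leaves a DR-CVaR constraint in $\mathbf{w}_{[t,N_h-1]}$. Its certificate $(\lambda,s_i,n_{ji},t)$ from Proposition~\ref{CVaR_LP} is tied to the time-$k$ samples $\hat{\mathbf{w}}_{ki}$, and ``the same structure'' gives no certificate for different samples at $k+1$. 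So the fallback has to be an explicit hypothesis, and ``fixed data'' is not quite enough. The index-shifted ambiguity set used at $k+1$ must be contained in the one used at $k$ for the corresponding ahead-constraint; for a fixed sample set this means aligning the sample components with the shift. The paper's proof tacitly relies on the same assumption.

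Two further steps fail as written.

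(i) The constraint is joint: a single CVaR is imposed on $g=\max_j(\cdot)$ in \eqref{chance_con}. The $t$-step-ahead constraint built from $\tilde{\mathbf{F}}_t$ contains only the blocks of $x_{t+1|k},\dots,x_{N_h-1|k}$. The $(t{-}1)$-step-ahead constraint at $k+1$ (the nominal one if $t=1$) also contains the block of $x_{N_h-1|k+1}=x_{N_h|k}$. Your Step~3 treats that block as a separate constraint, which is invalid inside a max. Already for a single distribution, $\mathrm{CVaR}_\alpha(X)\le 0$ and $Y\le 0$ a.s.\ do not imply $\mathrm{CVaR}_\alpha(\max\{X,Y\})\le 0$. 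For example, take $Y\equiv 0$ and $X=1,-1,-100$ with probabilities $\alpha/2,\alpha/2,1-\alpha$; then $\mathrm{CVaR}_\alpha(X)=0$ but $\mathrm{CVaR}_\alpha(\max\{X,Y\})=1/2$. You have two ways to repair this. One is to let the joint max at time $k$ contain every row that appears at $k+1$ and then use monotonicity of CVaR. The window $k+t+1,\dots,k+t+N_h$ of Definition~\ref{n_step_constraints}, with the prediction extended by $K_f$, does this. The other is to impose row-wise CVaR constraints, for which your separate argument is valid.

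(ii) The check you defer in Step~1 fails for the displayed candidate. Both $\bar{x}_{N_h-1|k+1}$ in \eqref{ck+1} and the last row of \eqref{Kk+1} drop the terms $A^{N_h-1-l}BK^\ast_{m,l|k}w_{m|k}$ generated by the optimal feedback gains. Hence in general $\tilde{u}_{N_h-1|k+1}\neq K_f x_{N_h-1|k+1}$, and neither $K_f\mathcal{X}_f\subseteq\mathcal{U}$ nor $(A+BK_f)\mathcal{X}_f\oplus D\mathcal{W}\subseteq\mathcal{X}_f$ can be invoked. Instead, define the last stage as $K_f$ applied to the full closed-loop expression of $x_{N_h|k}$ evaluated at $w_{0|k}=w_k$. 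This is affine in $w_{0|k+1},\dots,w_{N_h-2|k+1}$, so it is a legitimate last row of $\tilde{\mathbf{K}}_{k+1}$. With this candidate, the terminal-set part of your Step~3 goes through.
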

\begin{pf}
    Based on the construction process outlined above, we can infer that if the MPC has an initial feasible solution for $P_N$, then the candidate solution at each subsequent step is also a feasible solution for the MPC.\hfill $\square$
\end{pf}
\begin{rem}
In comparison to tube-based methods, constructing recursive feasibility within DA-DR MPC presents greater complexity due to the coupling between the decision variables and error dynamics. This approach introduces a substantial number of additional $t$-step-ahead constraints and variables, significantly increasing the potential computational burden. \par
Moreover, within the $t$-step-ahead constraints, as seen in step $(b)$, the worst-case disturbance is dependent on the variable $\mathbf{v}_k$. If we aim to decouple this relationship, such as by considering the vertices of the disturbance set, then the number of constraints grows exponentially with $t$. \par
Additionally, in the $t$-step-ahead constraint, we only require that the soft constraints hold from $k+t+1$ to $k+t+N_h-1$ under any realization of the disturbances over the initial $t$ steps, with the state reaching the terminal set at $k+t+N_h$. No other hard constraints are imposed. For the robust MPC, we need to ensure that constraints are satisfied at each prediction step under the worst case realizations, which implies that the solution of the robust case remains feasible within problem $P_{N_h}$.  
\end{rem}
\subsection{Objective Function and Stability}
Furthermore, to achieve a tractable formulation, we employ a sample average mean cost function, as illustrated below:
\begin{equation}
\begin{aligned}
    & \mathbb{E}_{\hat{\mathbb{P}}_N} \left\{ \sum_{i=0}^{N_h-1}  l\left( \phi_{i|k},u_{i|k}\right) +V_f(\phi_{N_h|k}) \right\}=\\
& \frac{1}{N}  \sum_{l=1}^{N} \left\{ \sum_{i=0}^{N_h-1}  \left ( \left \| \phi_{i|k}^{(l)} \right \|_Q^2+ \left \|  u_{i|k}^{(l)} \right \|_R^2\right )   + \left \| \phi_{N_h|k}^{(l)} \right \|_P^2 \right\},
\end{aligned}
\end{equation}
where $Q \in \mathbb{S}_+^{n_x} $, $R \in \mathbb{S}_{++}^{n_u} $, and $P$ is solution to the Lyapunov equation $(A+BK_f)^{\mathrm{T} }P(A+BK_f)-P=-Q-K_f^{\mathrm{T} }RK_f$. \par
Next, we consider the stability issue of DA-DR MPC. Due to the inherent additive disturbance in the system, the states $x_k$ and control inputs $u_k$ will not asymptotically converge to the origin. To characterize the asymptotic behavior of $x_k$ and $u_k$, we will use a long-term average cost function, as outlined in the following proposition. Let $J^\ast $ represent the optimal value of the cost function, and $\tilde{J} $ denote the value of the cost function obtained by using the candidate solution in \eqref{Kk+1} and \eqref{ck+1}. \par
\begin{prop} \label{stability}
Suppose that Assumption~\ref{terminal_set_1} holds and serves as the terminal constraints, and consider the problem $P_N$ of DA-DR MPC in Definition~\eqref{P_n}. Then, the following holds:
\begin{equation}
     \limsup_{T\to \infty }\frac{1}{T}\sum_{k=0}^{T}\mathbb{E}\left [\left \| x_k \right \|_Q^2+  \left \| u_k \right \|_R^2 \right ] \leq \mathrm{tr} (\Sigma_wP ),
\end{equation}
where $\Sigma_w$ represents the sample covariance $\Sigma_w$ of $w$.
\end{prop}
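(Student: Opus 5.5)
The plan is the standard stochastic-MPC argument: an expected cost decrease along closed-loop trajectories, obtained by comparing the optimal value $J^\ast(x_{k+1})$ with the cost $\tilde{J}(x_{k+1})$ of the shifted candidate solution \eqref{Kk+1}--\eqref{ck+1}, followed by a telescoping sum. Proposition~\ref{fb} guarantees that $P_{N_h}$ stays feasible along the closed loop, so the candidate is admissible at every $k$ and $J^\ast(x_{k+1})\le \tilde{J}(x_{k+1})$ holds for every realization $w_k\in\mathcal{W}$.

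The first step is to bound the candidate cost. Conditioned on $x_k$ and averaged over $w_k$ and the future disturbances, the candidate keeps the optimal policy from time $k$ on prediction steps $1,\dots,N_h-1$: the feedforward term has absorbed $K^\ast_{i,0|k}w_k$, and the remaining gains are shifted by one block. Hence the first $N_h-1$ stage costs of $\tilde{J}(x_{k+1})$ coincide with stages $1,\dots,N_h-1$ of $J^\ast(x_k)$. The new terminal step applies $u=K_f x$ on $\mathcal{X}_f$. Using the Lyapunov equation $(A+BK_f)^{\mathrm{T}}P(A+BK_f)-P=-Q-K_f^{\mathrm{T}}RK_f$ together with zero-mean, independent disturbances, one gets
\[
\mathbb{E}\bigl[\|x\|_Q^2+\|K_f x\|_R^2+\|(A+BK_f)x+Dw\|_P^2\bigr]=\mathbb{E}\|x\|_P^2+\mathrm{tr}(\Sigma_w D^{\mathrm{T}}PD),
\]
where the cross term vanishes because $w$ has zero mean and is independent of $x$. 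This yields the one-step decrease
\[
\mathbb{E}\bigl[J^\ast(x_{k+1})\mid x_k\bigr]\le J^\ast(x_k)-\|x_k\|_Q^2-\|u_k\|_R^2+\mathrm{tr}(\Sigma_w P),
\]
with $D$ absorbed into $\Sigma_w$ (or $D=I$), matching the stated bound.

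Next, take total expectations, sum from $k=0$ to $T$, and telescope. This gives $\sum_{k=0}^T\mathbb{E}[\|x_k\|_Q^2+\|u_k\|_R^2]\le J^\ast(x_0)-\mathbb{E}J^\ast(x_{T+1})+(T+1)\,\mathrm{tr}(\Sigma_wP)$. Since $J^\ast\ge0$ and $J^\ast(x_0)$ is finite by initial feasibility, dividing by $T$ and taking the $\limsup$ gives the claim.

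The main obstacle is that the cost is a sample average over $\hat{\mathbb{P}}_N$ rather than a true expectation, so the decrease inequality must be interpreted with respect to the empirical measure. Identifying stage costs between $\tilde{J}(x_{k+1})$ and $J^\ast(x_k)$ also requires the disturbance samples used at time $k+1$ to be the shifted samples used at time $k$, with $w_k$ drawn from the same distribution. I would first try to state the result under the assumption that the closed-loop $w_k$ are i.i.d. with covariance $\Sigma_w$ and that expectations in $J$ are taken with respect to this law (this is the sense of ``sample covariance''), noting that the sample-average cost then coincides with the expected cost. The remaining care is the cross term $2\,\mathbb{E}[x^{\mathrm{T}}(A+BK_f)^{\mathrm{T}}PDw]$, which vanishes only because the feedback structure \eqref{Kk} is strictly causal, so that $x_{N_h-1|k+1}$ is independent of the last disturbance.
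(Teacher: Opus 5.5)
Your proposal is correct and follows the paper's proof essentially step for step. Both bound $\mathbb{E}[J^\ast(x_{k+1})\mid x_k]$ by the cost of the shifted candidate, cancel the common stages, and kill the cross term by independence and zero mean. Both then use the Lyapunov equation to collapse the terminal terms to $\mathrm{tr}(\Sigma_w P)$, and finish with the tower property and a telescoping sum using $J^\ast\ge 0$.

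Your extra remarks make your version more careful rather than different. These are the factor $D^{\mathrm{T}}PD$, and the need to read both the sample-average cost and $\Sigma_w$ with respect to the true i.i.d. disturbance law so the stage costs can be identified. The paper passes over both points implicitly: it writes $w_{k+N_h}$ in place of $Dw_{k+N_h}$ and simply assumes i.i.d. zero-mean disturbances.
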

\begin{pf}
    We follow the ideas for tube-based methods in \cite{hewing2020recursively,kouvaritakis2010explicit}. First, we consider the expected value of the discrepancy between $J^\ast(x_{k+1})$ and $J^\ast(x_{k})$, conditioned on $x_k$, expressed as
    \begin{equation}
        \mathbb{E} \left [ J^\ast(x_{k+1})- J^\ast(x_{k})| x_k\right] 
    \end{equation}
    Since all disturbances are conditional on $x_k$, the term $w_k$ does not represent the actual realization of disturbance at time instant $k$, but rather follows a given distribution, just as disturbance in the prediction. Moreover, we assume that the considered disturbances are i.i.d. with zero mean. This discrepancy is upper-bounded in the form of the following inequality:
    \begin{equation} \label{upper_b}
\begin{aligned}
 \mathbb{E} \left [ J^\ast(x_{k+1})- J^\ast(x_{k})| x_k\right]    
\leq \mathrm {tr}(\Sigma_w P )-\left \| x_{k} \right \|_Q^2-\left \| u_{k} \right \|_R^2,
\end{aligned}
    \end{equation}
where $\Sigma_w$ denotes the sample covariance of $w$. The proof of the inequality in \eqref{upper_b} can be found in the Appendix. Next, we examine the expected value of $J^\ast(x_{k+1})$, conditional on $x_0$. It is equivalent to the expectation as shown below:
\begin{equation} \label{exp_val}
    \begin{aligned}
 \mathbb{E} \left [ J^\ast(x_{k+1})| x_0\right]    
= \mathbb{E} \left [ \mathbb{E} \left [ J^\ast(x_{k+1})| x_k\right] | x_0 \right]  ,
\end{aligned}
\end{equation}
The detailed derivation process is also provided in the Appendix. Next, we sum both sides of \eqref{upper_b} from $k=0$ onwards. Simultaneously, based on \eqref{exp_val}, we take the expectation of both sides of the inequality conditioned on $x_0$. This yields the following result:
\begin{equation} \label{soft_bounded}
     \limsup_{T\to \infty }\frac{1}{T}\sum_{k=0}^{T}\mathbb{E}\left [\left \| x_k \right \|_Q^2+  \left \| u_k \right \|_R^2 |x_0 \right ] \leq \mathrm{tr} (\Sigma_wP ).
\end{equation}
Therefore, the long-term average cost function is upper-bounded.\hfill $\square$
\end{pf} 

\subsection{Performance Comparison of DA-DR MPC and Tube-based MPC}
In this section, we primarily investigate the impact of  DA-DR MPC and tube-based MPC approaches on the initial state feasible set. It is evident that tube-based DR MPC with constant state feedback gain is a special case of DA-DR MPC. First, we define the set-valued mapping $P: \mathbb{R}^{n_v} \Rightarrow \mathbb{R}^{n_x\times n_c\times n_\rho} $ as
\begin{equation} \label{P}
    P(\mathbf{v} ):=\left \{ (x,\mathbf{c},\rho):f_i(x,\mathbf{c},\mathbf{v},\rho )\leq0, i\in \mathbb{I}  \right \}
\end{equation}
for $\mathbf{v} \in  \mathcal{V}$, where $\mathbf{v}$ is defined in such a way that if $\mathbf{v}  \notin  \mathcal{V}$, then $P(\mathbf{v}) = \emptyset $. Here, $n_v = \frac{1}{2} n_wn_uN_h(N_h-1)$ corresponds to the feedback control, and $n_c =n_uN_h$ corresponds to the feedforward control. We summarize all inequalities from \eqref{results}, \eqref{terminal_set}, and the input constraints from \eqref{PF} as inequalities $f_i(x,\mathbf{c},\mathbf{v},\rho )\leq0$ for the corresponding index set $\mathbb{I}$, where $\rho \in \mathbb{R}^{n_\rho}$ is defined as the relevant slack variables. If the tube-based MPC as a special case is feasible, then the corresponding $\mathbf{v}$ lies within the set $\mathcal{V}$. Next, we compare the initial state feasible set between tube-based MPC and DA-DR MPC. First, we consider the following lemma.\par
\begin{lem}
The set $\mathcal{V}$ is compact, and the set-valued mapping $P$ defined in \eqref{P} is outer semicontinuous (osc). 
\end{lem}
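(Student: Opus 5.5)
The natural route is to show that $\mathrm{gph}\,P$ is closed and that $\mathcal{V}$, the domain of $P$, is bounded and closed. Outer semicontinuity is then the standard characterisation: a set-valued mapping is osc if and only if its graph is closed (Rockafellar--Wets, Thm.~5.7). So the plan is to collect the defining inequalities $f_i(x,\mathbf{c},\mathbf{v},\rho)\le 0$ from \eqref{results}, \eqref{terminal_set}, the input constraints in \eqref{PF}, and the $t$-step-ahead constraints \eqref{fb_con}, and to observe that each $f_i$ is continuous jointly in $(x,\mathbf{c},\mathbf{v},\rho)$.

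Every term in these constraints is polynomial in the variables. Most are affine; the only bilinear terms are products such as $\hat{\mathbf{w}}_{ki}^{\mathrm T}\mathbf{V}_j\mathbf{v}_k$ with fixed data, and the dual-norm coupling with $\lambda$. Hence
\[
\mathrm{gph}\,P=\bigcap_{i\in\mathbb{I}}\{(\mathbf{v},x,\mathbf{c},\rho): f_i\le 0\}
\]
is an intersection of closed sublevel sets and is therefore closed, which gives osc directly. I would state this step first, since it needs nothing beyond continuity. One small point: the convention $P(\mathbf{v})=\emptyset$ for $\mathbf{v}\notin\mathcal{V}$ should coincide with $\mathcal{V}=\mathrm{dom}\,P$, so that the graph is exactly the set above.

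For compactness of $\mathcal{V}=\{\mathbf{v}:\exists (x,\mathbf{c},\rho)\text{ with } f_i\le 0\ \forall i\}$ I would argue boundedness and closedness separately.

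\emph{Boundedness.} The hard input constraints $G u_{i|k}\le g$ must hold robustly for every $\mathbf{w}_k\in\mathcal{W}^{N_h}$. In particular they must hold in the $t$-step-ahead form for all realizations, since the input constraints are hard. Because $\mathcal{W}$ is a full-dimensional polytope containing the origin (zero mean) and $G$ has full column rank, $\mathcal{U}$ is bounded. Therefore $\mathbf{K}_k\mathbf{w}_k+\mathbf{c}_k\in\mathcal{U}^{N_h}$ for all $\mathbf{w}_k$ in a set with nonempty interior. This forces each block $K_{l,i|k}$ to be bounded, so $\mathbf{v}$, which is a linear re-indexing of $\mathbf{K}_k$, is bounded.

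\emph{Closedness.} $\mathcal{V}$ is the projection of $\mathrm{gph}\,P$ onto the $\mathbf{v}$-coordinates, and projections of closed sets need not be closed. This is the main obstacle. The first thing I would try is to show that along a sequence $\mathbf{v}^\nu\to\mathbf{v}$ with $(x^\nu,\mathbf{c}^\nu,\rho^\nu)\in P(\mathbf{v}^\nu)$, the companion variables can be chosen bounded:
\begin{itemize}
\item $x$ lies in the bounded set $\mathcal{X}$ or $\mathcal{X}_f$;
\item $\mathbf{c}$ is bounded by the input constraints;
\item the slack variables $\lambda, s_i, t, n_{ji}, y_{tj}$ can be replaced by bounded feasible choices, because $\mathcal{K}$ is compact. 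For instance, $t$ can be restricted to a bounded interval by the level-coercivity argument in Lemma~\ref{lem_maxmin}, and $\lambda$ and $n_{ji}$ admit bounded optimal duals by LP theory on the compact polytope.
\end{itemize}
Passing to a convergent subsequence and using closedness of the graph then yields $(x,\mathbf{c},\rho)\in P(\mathbf{v})$, so $\mathbf{v}\in\mathcal{V}$. If bounding the slacks turns out to be delicate, a fallback is to add explicit large bounds on $\rho$ without changing feasibility. Either way, $P$ becomes locally bounded, and osc together with local boundedness gives closedness of the domain.
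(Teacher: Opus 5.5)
Your osc step (closed graph, Rockafellar--Wets Thm.~5.7) and your boundedness step (robust satisfaction of the hard input constraints bounds $\mathbf{K}_k$, hence $\mathbf{v}$) follow the paper's proof. Your boundedness argument is in fact sharper than the paper's. The paper only uses that $\mathcal{W}$ is non-empty, and that is not enough: for a lower-dimensional $\mathcal{W}$, e.g.\ $\mathcal{W}=\{0\}$, the gain is unconstrained in the missing directions. You correctly use that $\mathcal{W}$ has nonempty interior.

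Two corrections on that part. First, a tall full-column-rank $G$ does not make $\mathcal{U}$ bounded: $G=[1\ 1]^{\mathrm T}$ is a counterexample. You must take compactness of $\mathcal{U}$ as the standing assumption the paper uses. Second, there are no bilinear terms. $\hat{\mathbf{w}}_{ki}^{\mathrm T}\mathbf{V}_j\mathbf{v}_k$ is linear in $\mathbf{v}_k$, because $\hat{\mathbf{w}}_{ki}$ and $\mathbf{V}_j$ are data, and the $\lambda$-coupling in \eqref{results} is a linear inequality.

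The genuine gap is closedness of $\mathcal{V}$. You rightly flag it, since the paper's ``non-strict inequality constraints'' remark does not by itself handle the projection, but your sketch does not close it.

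\begin{itemize}
\item It rests on $x\in\mathcal{X}$ or $x\in\mathcal{X}_f$, and no constraint in $P$ imposes this. The chance constraints start at $i=1$ and hold only in the CVaR sense. Moreover, $x$ enters every $f_i$ solely through $L_x x$ (and $A^{N_h}x$ in the terminal constraint).
\item Consequently, when $A$ is singular, $P(\mathbf{v})$ is unbounded along $\ker A$, and your concluding claim that ``$P$ becomes locally bounded'' is false.
\item The bounds on $\lambda, t, n_{ji}, y_{tj}$ are asserted rather than proved.
\item The fallback of adding large bounds on $\rho$ changes $P$ unless you show it does not, which is the same problem again.
\end{itemize}

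The missing idea is polyhedrality. Every $f_i$ is affine in $(x,\mathbf{c},\mathbf{v},\rho)$. This covers the CVaR constraints \eqref{results} and the $t$-step-ahead constraints \eqref{fb_con}. It also covers the robust input and (polytopic) terminal constraints, once written as finitely many linear inequalities via LP duality or vertex enumeration. Hence $\mathrm{gph}\,P$ is a polyhedron, and $\mathcal{V}$, its coordinate projection, is again a polyhedron by Fourier--Motzkin elimination, hence closed. No bounds on the companion variables are needed. Together with your boundedness argument this gives compactness; it is presumably what the paper's one-line closedness claim implicitly relies on.
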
 \label{osc}
\begin{pf}
    We can easily verify that every limit point of any sequence $\{\mathbf{v}_n\}\subset \mathcal{V} $ belongs to $ \mathcal{V}$ due to the non-strict inequality constraints, implying that the set is closed. Now, assume $\mathbf{v}$ is unbounded. Since $\mathbf{K}_k$ in \eqref{Kk} is the reformulation of $\mathbf{v}$ and the set $\mathcal{W}$ is non-empty, it brings an unbounded control input. This contradicts the fact that the input set in \eqref{U} is compact. Next, the graph of the mapping $P$ is given by:
    \begin{equation}
        \mathrm{gph} \ P=\mathbb{R}^{n_x\times n_c\times n_\rho}\times    \mathcal{V}\cap \left ( \cap_{i\in\mathbb{I} }\left \{ (x,\mathbf{c},\rho,\mathbf{v}  ) :f_i\leq0\right \} \right ) 
    \end{equation}
    which forms a closed set. Therefore, the mapping $P$ is osc. \hfill $\square$
\end{pf}
Then, we come to the following result regarding the initial state feasible set.
\begin{thm}
    Suppose that the DA-DR MPC problem is subject to the inequality constraints in  \eqref{results}, \eqref{terminal_set} and the input constraints in \eqref{U}, denoted as $f_i$ in \eqref{soft_bounded} with the index set $\mathbb{I}$. If there exists a tubed-based strategy $\hat{\mathbf{v}} $ such that the solutions for $x$ and $\mathbf{c}$ are feasible, and there exist $\hat{x} $, $\hat{\mathbf{c}} $, and $\hat{\rho} $ such that $\mathrm {int} \ P(\hat{\mathbf{v} })  \ne \emptyset  $, i.e., $f_i(\hat{x},\hat{\mathbf{c} },\hat{\mathbf{v} },\hat{\rho}) < 0$ for every $i \in \mathbb{I} $, then the initial feasible set of $x$ under DA-DR MPC is at least as large as that under the tubed-based strategy. 
\end{thm}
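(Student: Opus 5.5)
The plan is to compare the two feasible sets through the set-valued mapping $P$ in \eqref{P}. The initial feasible set of DA-DR MPC is the projection onto the $x$-coordinate of $\bigcup_{\mathbf{v}\in\mathcal{V}} P(\mathbf{v})$. The tube-based feasible set is the projection of the single slice $P(\hat{\mathbf{v}})$, where $\hat{\mathbf{v}}$ is the vectorization of the block matrix $\mathbf{K}_k$ produced by the fixed tube feedback gain. Tube-based DR MPC with a constant gain is a special case of the disturbance-affine parameterization \eqref{uk}: its feedback $u=K(x-\bar x)+\bar u$ unrolls into a strictly lower block-triangular matrix in the disturbances. Hence $\hat{\mathbf{v}}\in\mathcal{V}$, and every $(x,\mathbf{c},\rho)\in P(\hat{\mathbf{v}})$ is also feasible for DA-DR MPC with $\mathbf{v}=\hat{\mathbf{v}}$. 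The inclusion of feasible sets then follows by taking the union over $\mathbf{v}$ and projecting.

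To make this rigorous, I will proceed in three steps. First, I check that the constraints \eqref{results}, \eqref{terminal_set} and the input constraints, evaluated at $\mathbf{v}=\hat{\mathbf{v}}$, coincide with or are implied by the tube-based constraints. For this I will show that the tube-based DR-CVaR reformulation, with the same Wasserstein ball and the same $L_1$ metric, produces exactly the inequalities of Proposition~\ref{CVaR_LP} with $\mathbf{V}_j\hat{\mathbf{v}}$ replaced by the tube's error-propagation terms. The slack variables $s_i$, $n_{ji}$, $\lambda$, $t$ can then be copied over. Second, I use the Lemma (compactness of $\mathcal{V}$ and outer semicontinuity of $P$) together with the strict feasibility hypothesis $f_i(\hat x,\hat{\mathbf{c}},\hat{\mathbf{v}},\hat\rho)<0$. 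Since the $f_i$ are affine in $(x,\mathbf{c},\rho)$ for fixed $\mathbf{v}$, this Slater point makes $P(\hat{\mathbf{v}})$ a convex set whose closure equals the closure of its interior. Third, I conclude that the projection of $P(\hat{\mathbf{v}})$ is contained in the projection of $\mathrm{gph}\,P$ onto $x$.

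The Slater condition and the osc property serve a specific purpose. Continuity of the $f_i$ in $\mathbf{v}$ plus strict inequality gives a neighborhood of $\hat{\mathbf{v}}$ on which $P(\mathbf{v})\neq\emptyset$, so $P$ is also inner semicontinuous at $\hat{\mathbf{v}}$ and hence continuous there, in the sense of Rockafellar--Wets. This guarantees that $\hat{\mathbf{v}}$ is not an isolated degenerate point of $\mathcal{V}$. It also shows that the DA-DR optimum, which ranges over a compact $\mathcal{V}$ by the Lemma, attains a feasible set that contains a neighbourhood-perturbation of the tube solution.

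I expect the main obstacle to be the first step: showing that the tube-based constraints map exactly onto the DA-DR inequality system at $\hat{\mathbf{v}}$. This is delicate for the $t$-step-ahead constraints \eqref{fb_con} and the terminal-set constraint, where the tube method uses the error dynamics $A+BK$ instead of the free gains. I would try first to write the tube's predicted state as \eqref{xk} with $\mathbf{K}_k$ built from powers of $A+BK$, and then verify constraint by constraint that the dual certificates transfer.
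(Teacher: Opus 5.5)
Your proof is correct, but it takes a different and more direct route than the paper. You note that the DA-DR initial feasible set is $\bigcup_{\mathbf{v}\in\mathcal{V}}T(P(\mathbf{v}))$, with $T$ the projection onto $x$, while the tube-based set is the single slice $T(P(\hat{\mathbf{v}}))$. Since $\hat{\mathbf{v}}\in\mathcal{V}$, the inclusion follows immediately; the tube law unrolls into a strictly lower block-triangular $\mathbf{K}_k$ with blocks $K(A+BK)^{m}D$, so $\hat{\mathbf{v}}$ is an admissible DA-DR gain.

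The paper argues differently. It first proves that $P$ is continuous: inner semicontinuity comes from the strict-feasibility point, by approximating interior points and then using $\mathrm{cl}(\mathrm{int}\,P(\tilde{\mathbf{v}}))=P(\tilde{\mathbf{v}})$, and outer semicontinuity comes from the Lemma. It then claims that $T\circ P$ and the volume map $L\circ T\circ P$ inherit continuity. Finally, it uses compactness of $\mathcal{V}$ to obtain a volume-maximizing gain whose feasible set has Lebesgue measure at least that of the slice at $\hat{\mathbf{v}}$.

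Your route buys a stronger conclusion: set inclusion rather than a comparison of volumes. It needs neither the Slater hypothesis, nor compactness of $\mathcal{V}$, nor the delicate claims that projection and Lebesgue measure preserve continuity of set-valued maps. The paper's machinery only adds the existence of a single volume-maximizing gain, which the theorem does not ask for.

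Accordingly, you can drop your second paragraph on Slater points and osc. As written it is also slightly off. Nonemptiness of $P(\mathbf{v})$ on a neighbourhood does not by itself give inner semicontinuity. The sentence about the DA-DR optimum "attaining a feasible set containing a neighbourhood-perturbation" contributes nothing the inclusion uses.

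The only substantive step is your first one, which the paper treats as definitional when it calls tube-based DR MPC a special case. It requires the tube problem to use the same joint CVaR reformulation over the same Wasserstein ball, so that its constraints are literally $f_i(\cdot,\hat{\mathbf{v}},\cdot)\le 0$. Also, \eqref{fb_con} is not among the $f_i$ in this theorem, so you need not treat it here. Including it would not change anything, since all constraints are jointly affine in $(x,\mathbf{c},\rho,\mathbf{v})$.
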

\begin{pf}
First, we aim to prove the continuity of the mapping $P$. For simplicity, we denote $z \in \mathbb{R}^{n_z} $ as $(x, \mathbf{c},\rho)$ and define the function $f(z,\mathbf{v} ):= \max_{i\in\mathbb{I} }f_i(z,\mathbf{v})$, which is also convex and continuous due to the convexity and continuity of $f_i$. Then, for every $\mathbf{v}$, $P(\mathbf{v})$ is convex. Since there exist $\hat{z}$ and $\hat{\mathbf{v}} $ such that $f(\hat{z}, \hat{\mathbf{v}} ) < 0$, then there is an open set $\mathcal{O}$ containing $\hat{\mathbf{v}}$ such that for any $\mathbf{v} \in \mathcal{O}$, $f(\hat{z}, \mathbf{v}) <0$. This implies that $\mathrm{int}  P(\mathbf{v})\ne \emptyset ,\forall \mathbf{v} \in\mathcal{O} $. Then, we consider any $\tilde{\mathbf{v}} \in \mathcal{O}$ and $\tilde{z} \in \mathrm{int} P(\tilde{\mathbf{v}})$. Due to the continuity of $f$, there exists a open set $\mathcal{Q} $ of $(\tilde{z}, \tilde{\mathbf{v}})$ with $\mathcal{Q} \subset \mathrm{int} P(\tilde{\mathbf{v}})\times \mathcal{O} $ such that the set $\mathcal{Q} \subset \mathrm {gph}P  $ with $f<0 $ on $\mathcal{Q}$. This implies that $\forall  \mathbf{v}^k\rightarrow  \tilde{\mathbf{v}}, \exists N\in \mathcal{N}_\infty , z^k\underset{N}{\rightarrow} \tilde{z} $ with $ z^k\in P(\mathbf{v}^k) $, i.e., $\tilde{z}\in \liminf_{\mathbf{v}\rightarrow\tilde{\mathbf{v}}  }P(\mathbf{v})$. Then, $\mathrm {int} P(\tilde{\mathbf{v} })   \subseteq \liminf_{\mathbf{v}\rightarrow\tilde{\mathbf{v}}  }P(\mathbf{v})$. Because the inner limit is always closed and the closure $\mathrm {cl}\left (   \mathrm {int} \ P(\tilde{\mathbf{v} })   \right ) = P(\tilde{\mathbf{v} })$, the mapping $P$ is inner semicontinuous at $\tilde{\mathbf{v} }$ when we take closure of both sides. Based on the result regarding the osc in Lemma~\ref{osc}, we can conclude that the mapping $P $ is continuous at $\tilde{\mathbf{v} }$. Furthermore, the continuity can be extended to any $\mathbf{v}$ with $   \mathrm {int} \ P(\mathbf{v} )   \ne \emptyset $.\par
Next, we consider the projection $T: (x,\mathbf{c},\rho )\rightarrow x$. The composite mapping $T\circ P$ remains a set-valued mapping, and the projection preserves the continuity of the mapping. Thus, we obtain the mapping from $\mathbf{v}$ to the initial state feasible set. For a specific value of $\mathbf{v}$, such as the tube-based strategy with constant state feedback, we can derive the corresponding state feasible set. We now characterize the volume of the initial state feasible set. For any $\mathbf{v} \in \mathcal{V}$, the set under the mapping $T\circ P$ is measurable. We define the following Lebesgue integral $L(\mathcal{A}):=\int_{\mathbb{R}^{n_x} } \chi_{\mathcal{A}}(x)  d\mu(x)$, where $\chi$ is the indicator function. When the set 
$\mathcal{A}$ varies continuously, the integral is also continuously dependent on $\mathcal{A}$. Eventually, we can establish a continuous composite function $L\circ T\circ P$ that measures the volume of the initial feasible set under feedback control $\mathbf{v}$. It is worth mentioning that since the state set, input set, and $\mathcal{V}$ are all compact, the initial feasible set is also compact, which means that the integrals all have finite values. Subsequently, for any compact set encompassing $\hat{\mathbf{v}} $ within the set $\mathcal{V}$, we can always find a maximum value with respect to the mapping $L\circ T\circ P$. This implies that the initial state feasible set resulting from DA-DR MPC is at least as large as that obtained by the tube-based strategy. \hfill $\square$
\end{pf}
\begin{rem}
In cases where the tube-based strategy does not have a feasible solution, we cannot conclusively determine that the initial state feasible set under DA-DR MPC is empty. However, the converse does hold true.  Furthermore, when we consider a horizon $N_h=1$, the DA-DR MPC becomes trivially equivalent to the tube-based strategy.
\end{rem}

\section{Simulative Example and Results}
Next, we consider the application of DA-DR MPC in the context of Gasoline-Controlled Autoignition (GCAI). GCAI is an innovative lean combustion technology where fuel and air are uniformly mixed during the compression phase, and the gas mixture compresses until autoignition occurs. Besides, the combustion takes place simultaneously throughout the combustion chamber without a high-temperature ignition flame while maintaining a high air-fuel ratio, which significantly reduces the formation of NOx and particulates \cite{oakley2001experimental}. However, due to the lack of a direct ignition mechanism and strong cyclic coupling, even a small outlier can lead to irregular combustion cycles, which makes control challenging. Due to the stochastic nature of the combustion process, it is essential to incorporate process uncertainties into our MPC problem formulation to enhance control performance.\par
\subsection{System Description and Evaluation}
In this study, we consider Negative Valve Overlap (NVO) \cite{koopmans2003direct}, the duration of fuel injection, and water injection as our control inputs. The states are Center of Combustion (CA50), Indicated Mean Effective Pressure (IMEP), and Maximum Pressure Gradient (DPMAX). CA50, which is crucial for combustion efficiency and stability, is to be stabilized at a set point, such as $7\ \text{°CA}$ after Top Dead Center (aTDC). IMEP, related to output torque, is subject to reference tracking, while DPMAX is constrained within a specific range to maintain safe combustion conditions.\par
Then, we consider the following linearized model based on the work \cite{nuss2019reduced} for the normalized system:
\begin{equation}
    A=\begin{bmatrix}
 0.15 & 0.84 & -0.57\\
 0.20 & 0.23 & 0.16\\
 -0.01 & -0.15 & 0.35
\end{bmatrix},  
B=\begin{bmatrix}
 -0.14 & 0.40 & -1.27\\
 0.65 & -0.08 & 0.17\\
 1.14 & -0.23 & 0.91
\end{bmatrix},
\end{equation}
where $A$ is system matrix and $B$ is input matrix. Additionally, we assume that the external disturbance primarily affects CA50 and IMEP directly, while having no direct impact on DPMAx. Then, the allocation of disturbances is governed by the corresponding matrix $D$. In this work, we adopt the uncertainty model presented in \cite{chen2023gasoline} as the generator of disturbance. \par
In the following experiments, the radius of the Wasserstein ball is set to $0.008$, and $15$ samples of disturbance trajectories are used. The preset violation tolerance is $\alpha=0.1$, and the prediction horizon $N_h$ is set to $4$ due to the rapid dynamic behavior of the process. The weighting matrix $Q$ is set to $\mathrm {diag}(10,10,1)  $, and the matrix $R$ is set to $\mathrm {diag}(1,1,1)  $. Besides, the permissible range for CA50 spans from $4$ °CA to $11$ °CA, while the feasible range for IMEP lies between $2$ and $4$ bar. Additionally, DPMAX is constrained to a range of $0$ to $5$ bar/°CA.
Furthermore, the control objective for IMEP is to track a time-varying reference, while the objective for CA50 is to maintain it at a set point of $7$ °CA aTDC. \par
Regarding disturbances $w$, another essential task is to determine the boundaries of the disturbance set. This involves characterizing a convex outer approximation of the set using multiple inequalities, thereby forming a polytope. A substantial amount of disturbance data can be obtained from testbed measurements, which captures model mismatches, inherent process stochasticity, and noise. To balance the accuracy of the set description with computational complexity, we apply the PCA techniques outlined in \cite{cheramin2021data} to the collected dataset. This approach allows us to derive a simplified and manageable disturbance set. \par
To evaluate the performance, we conducted 60 simulation runs, each subject to randomly sampled disturbance trajectories. Our objective was to assess the control performance of DA-DR MPC and tube-based MPC under the identical Wasserstein ambiguity set. In the case of tube-based MPC, the feedback gain is derived by solving an infinite-horizon algebraic Riccati equation. The primary performance metrics of interest are the stability of CA50 and the tracking capability of IMEP. For CA50, our goal is to maintain its value within the specified constraint range. Excessively high CA50 values indicate incomplete combustion, while excessively low CA50 values signify early combustion, both of which are undesirable outcomes. Therefore, we examine the probability of CA50 samples violating the upper and lower bounds, as well as the overall variance of CA50 in the closed-loop behaviors under these two controllers. Regarding IMEP, due to its relatively lower uncertainty, we primarily focus on its average performance across simulation samples. We compare the two controllers based on their ability to respond to and track time-varying IMEP references. The controller that exhibits minimal tracking offsets demonstrates superior tracking performance. These metrics provide a evaluation of the robustness and efficiency of DA-DR MPC and tube-based MPC in managing the stochastic nature of the GCAI system.\par

\subsection{Results and Discussion}
First, we focus on the quantification of the disturbance set. Based on extensive analysis of historical data, the marginal distribution of the disturbance is bounded, as shown in the figure below.\par
\begin{figure}[htb]
\centering
\includegraphics[width=0.475\textwidth]{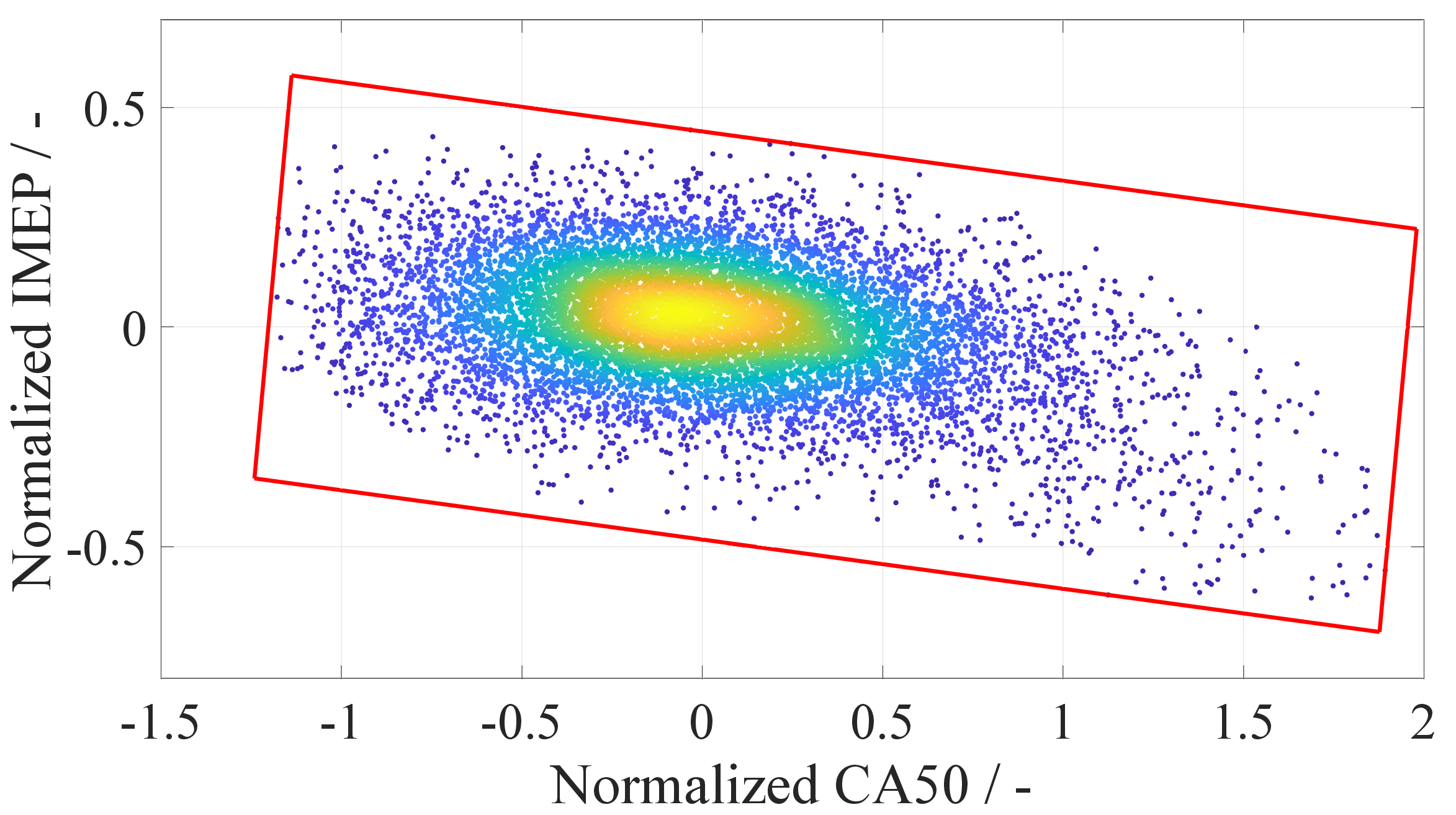}    
\caption{Historical data on the disturbances of CA50 and IMEP, and their identified disturbance margins.}  
\label{fig1}                                              
\end{figure}
In Figure~\ref{fig1}, we employed the methodologies outlined in \cite{cheramin2021data} to identify the boundaries of disturbance. These boundaries constitute the disturbance set in \eqref{W} required for our analysis. Utilizing a more refined partition could reduce conservatism to some extent, however, it would also result in increased computational complexity. In practical engine cases, the inherent disturbance evolves in response to changing environmental conditions. The conditional disturbances are typically distributed within the identified boundaries in Figure~\ref{fig1}. Therefore, we use the most recent disturbance data to estimate its conditional distribution.  \par
Then, we compare the performance of CA50 and IMEP under DA-DR MPC and tube-based MPC. The results below illustrate the controllers' impact on CA50 variation as well as the average performance of the system.
\begin{figure}[htb]
\centering
\includegraphics[width=0.475\textwidth]{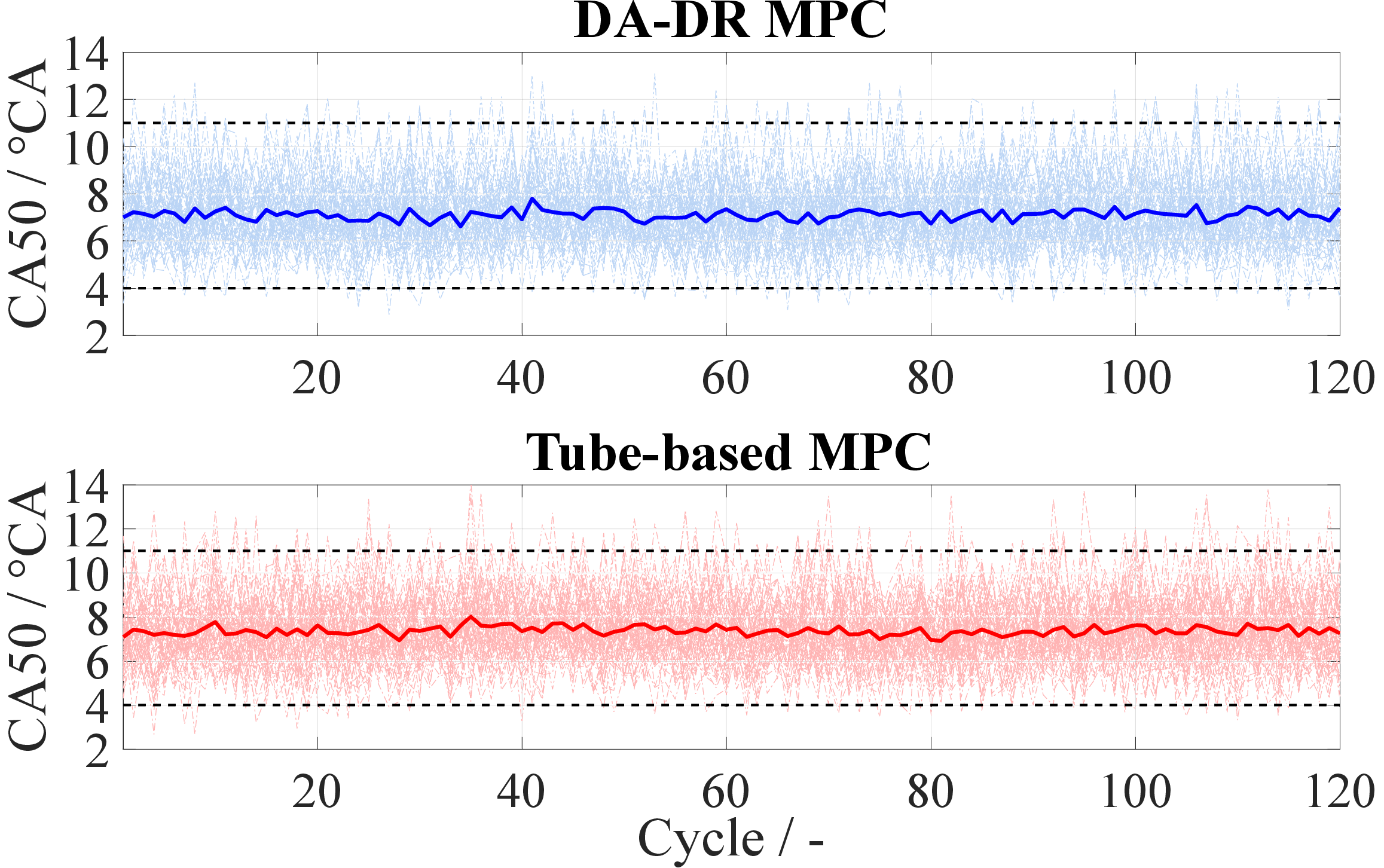}    
\caption{Variation comparison in CA50 under DA-DR MPC and tube-based MPC. The black dashed line represents the hard state constraint for CA50. Mean performance trajectories are depicted by the deep blue (DA-DR MPC) and deep red (tube-based MPC) lines. Light-colored lines illustrate individual simulation results under random disturbances, with a total of 60 simulations sampled.}  
\label{fig2}                                
\end{figure} \par
\begin{table}[htb]
\begin{center}
\caption{Statistical analysis of CA50 under two MPCs}
\begin{tabular}{|c|c|c|c|}
\hline
&Variance  & CA50$\ >11 $  & CA50$\ <4$ \\ \hline
DA-DR &2.2598& 1.52 $\%$ & 0.99 $\%$ \\ \hline
tube-based&2.4316 & 2.07 $\%$ & 0.77 $\%$ \\ \hline
\end{tabular}
\end{center}
\label{table1}
\end{table}\par
From Figure~\ref{fig2} and Table~\ref{table1}, we can observe that both controllers exhibit similar frequencies of violation with respect to the lower bound at $4$ °CA, with the tube-based MPC slightly outperforming the DA-DR MPC in this regard. However, when considering the variance and the frequency of upper bound violations across $60$ simulation runs, the DA-DR MPC demonstrates a smaller variance of $2.2598$ and a lower frequency of upper bound violations. Particularly in terms of average performance, DA-DR MPC can achieve superior performance compared to the tube-based method, as evidenced in Figure~\ref{fig3}.\par

\begin{figure}[htb]
\centering
\includegraphics[width=0.475\textwidth]{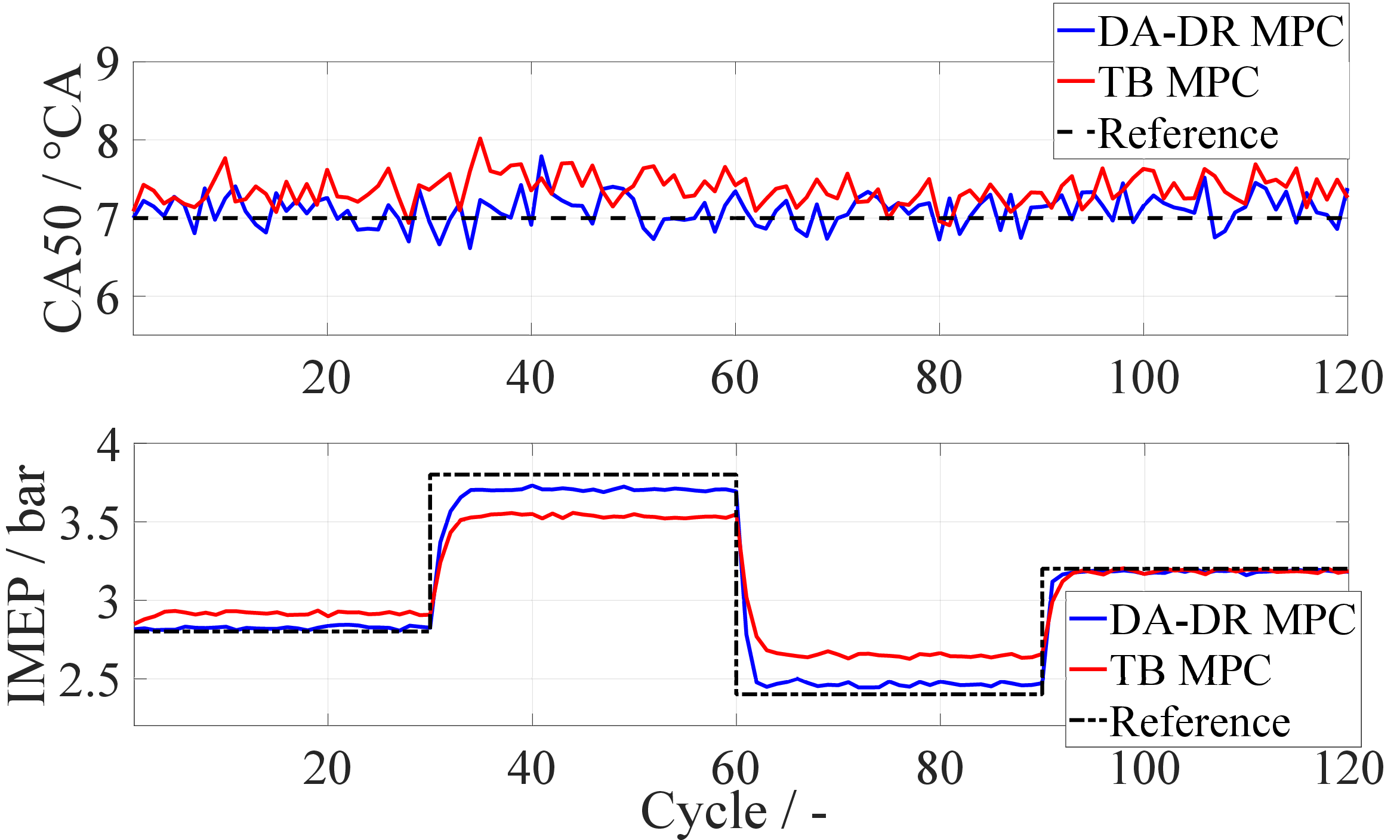}   
\caption{Average performance of CA50 and IMEP under DA-DR MPC and tube-based MPC control strategies. The blue lines and red lines represent DA-DR MPC and tube-based MPC, respectively. }  
\label{fig3}                                       
\end{figure}
Figure~\ref{fig3} illustrates the average performance of CA50 and IMEP under the two control strategies. Our goal is to maintain CA50 at the set point of $7$ °CA while ensuring that the controllers accurately track the IMEP reference. From Figure~\ref{fig3}, it is evident that under tube-based MPC control, the mean CA50 is higher than the set point, whereas the DA-DR MPC effectively maintains the mean CA50 around the desired set point. Furthermore, as shown in the lower portion of the figure, DA-DR MPC exhibits better tracking performance compared to the tube-based MPC, with the mean IMEP closer to the reference and presenting a smaller offset. In contrast, the tube-based MPC demonstrates suboptimal tracking performance at both high and low IMEP levels. Additionally, the DA-DR MPC responds more rapidly to abrupt reference changes. This is attributable to the DA-DR MPC's ability to provide greater control flexibility while satisfying the specified constraints. Conversely, the tube-based MPC adopts a more restricted action due to the conservative constant feedback approach to suppress the propagation of uncertainty. As a result, when chance constraints are all satisfied, the DA-DR MPC outperforms the tube-based MPC in terms of both average performance and variance, which are evaluated by the objective function. Besides, we conduct a Wilcoxon Rank Sum Test to compare two simulation groups, with the sample size of $60$. The test yields a p-value of $0.0016$, indicating a statistically significant difference in these two methods.

\section{Conclusion}
In this study, we proposed a distributionally robust stochastic MPC framework that operates without prior knowledge of specific disturbance distributions. By leveraging a disturbance-affine feedback strategy, our approach effectively manages uncertainty propagation and enlarges the feasible set of the Stochastic MPC problem compared to the tube-based approaches. Additionally, the recursive feasibility and the stability can be guaranteed within the proposed MPC framework. Simulation results demonstrate that our method outperforms tube-based MPC in both average performance and state variance control, showcasing its less control conservatism and strong potential for practical applications. 
However, the computational complexity of the DA-DR MPC remains a challenge, particularly in ensuring its recursive feasibility. Future work will focus on enhancing computational efficiency to enable real-time implementation in real-world systems.

\begin{ack}                               
This study was conducted as part of Research Unit Forschungsgruppe (FOR) 2401, and was supported by the German Research Foundation (DFG). We are deeply grateful for their valuable support. Additionally, we would like to extend our sincere gratitude to Professor Heike Vallery and Dr. Maximilian Basler for their valuable suggestions and feedback on this work.

\end{ack}
\bibliographystyle{plain}       
\bibliography{autosam}  

\appendix
\section{Appendix}
In this appendix, we present the previously defined matrices that were not clarified and some of the intermediate steps involved in the proofs.

\subsection{Matrices in Equation~\texorpdfstring{\eqref{xk}}{(xk)}}
The matrices $L_x \in \mathbb{R}^{n_xN_h \times n_x}  $, $L_u \in \mathbb{R}^{n_xN_h \times n_uN_h} $ and $L_w \in \mathbb{R}^{n_xN_h \times n_wN_h} $ are given by
\begin{equation} \label{matrix_Lx_Lu}
\begin{aligned}
   & L_x = \begin{bmatrix}
A \\
A^2 \\
\vdots \\
A^{N_h}
\end{bmatrix}, \quad
L_u = \begin{bmatrix}
B & 0 & 0 & \cdots & 0 \\
AB & B & 0 & \cdots & 0 \\
\vdots & \vdots & \vdots & \ddots & \vdots \\
A^{N_h-1}B & A^{N_h-2}B & \cdots & \cdots &  B
\end{bmatrix}, \\
&L_w = \begin{bmatrix}
D & 0 & 0 & \cdots & 0 \\
AD & D & 0 & \cdots & 0 \\
\vdots & \vdots & \vdots & \ddots & \vdots \\
A^{N_h-1}D & A^{N_h-2}D & \cdots & \cdots & D
\end{bmatrix}.
\end{aligned}
\end{equation}

\subsection{Proof of \texorpdfstring{Equation~\eqref{fb_con}}{Equation (fb\_con)}}
\begin{pf}
The intermediate reformulations in \eqref{fb_con} can be expressed as follows:
    \begin{equation*} 
\begin{aligned}
    &\max_{\mathbf{w}_{[0,t-1]} \in \mathcal{W}^{t}}\tilde{\mathbf{F}}_{tj} L_x x_k +\tilde{\mathbf{F}}_{tj}  L_u \mathbf{c}_k +  \tilde{\mathbf{F}}_{tj}  L_u \mathbf{K}_k \mathbf{w}_k\\
   & +  \tilde{\mathbf{F}}_{tj} L_w \mathbf{w}_k  -  \tilde{\mathbf{f}}_{tj} \leq 0 \\
\underset{ (a)}{\Leftrightarrow} &\max_{\mathbf{w}_{[0,t-1]} \in \mathcal{W}^{t}} \tilde{\mathbf{F}}_{xj}x_k+\tilde{\mathbf{F}}_{uj} \mathbf{c}_k + \mathbf{v}_k^{\mathrm {T}   }\tilde{\mathbf{V}}_j^{\mathrm {T}   }\mathbf{w}_k
+\tilde{\mathbf{F}}_{wj} \mathbf{w}_k\leq \tilde{\mathbf{f}}_{tj} \\
\underset{ (b)}{\Leftrightarrow} &\max_{\mathbf{w}_{[0,t-1]} \in \mathcal{W}^{t}} \tilde{\mathbf{F}}_{xj}x_k+\tilde{\mathbf{F}}_{uj} \mathbf{c}_k + \mathbf{v}_k^{\mathrm {T}   }\tilde{\mathbf{V}}_j^{(1)\mathrm {T}   }\mathbf{w}_{[0,t-1]}\\
&+ \mathbf{v}_k^{\mathrm {T}   }\tilde{\mathbf{V}}_j^{(2)\mathrm {T}   }\mathbf{w}_{[t,N-1]}+\tilde{\mathbf{F}}_{wj}^{(1)} \mathbf{w}_{[0,t-1]}+\tilde{\mathbf{F}}_{wj}^{(2)} \mathbf{w}_{[t,N-1]} \leq \tilde{\mathbf{f}}_{tj}\\
\end{aligned}
\end{equation*}
\begin{equation} 
\begin{aligned}
\underset{ (c)}{\Leftrightarrow} &\min_{y_{tj}}\tilde{\mathbf{F}}_{xj}x_k+\tilde{\mathbf{F}}_{uj} \mathbf{c}_k + \mathbf{v}_k^{\mathrm {T}   }\tilde{\mathbf{V}}_j^{(2)\mathrm {T}   }\mathbf{w}_{[t,N-1]}+\tilde{\mathbf{F}}_{wj}^{(2)} \mathbf{w}_{[t,N-1]}\\
&-\tilde{\mathbf{f}}_{tj}+\left ( I_{t\times 1}\otimes h \right ) ^{\mathrm{T} }y_{tj}\leq 0, \quad y_{tj} \geq  0,\\
&\left( I_{t\times t}\otimes h \right ) ^{\mathrm{T} }y_{tj}- \tilde{\mathbf{V}}_j^{(1)  }\mathbf{v}_k-\tilde{\mathbf{F}}_{wj}^{(1)\mathrm{T}} \geq 0,\\
\underset{ (d)}{\Leftrightarrow} &\tilde{\mathbf{F}}_{xj}x_k+\tilde{\mathbf{F}}_{uj} \mathbf{c}_k + \mathbf{v}_k^{\mathrm {T}   }\tilde{\mathbf{V}}_j^{(2)\mathrm {T}   }\mathbf{w}_{[t,N-1]}+\tilde{\mathbf{F}}_{wj}^{(2)} \mathbf{w}_{[t,N-1]}\\
&-\tilde{\mathbf{f}}_{tj}+\left ( I_{t\times 1}\otimes h \right ) ^{\mathrm{T} }y_{tj}\leq 0, \quad y_{tj} \geq  0,\\
&\left( I_{t\times t}\otimes h \right ) ^{\mathrm{T} }y_{tj}- \tilde{\mathbf{V}}_j^{(1)  }\mathbf{v}_k-\tilde{\mathbf{F}}_{wj}^{(1)\mathrm{T}} \geq 0,\\
\end{aligned}
\end{equation}
Step $(a)$ corresponds to the transformation of the decision variables into vector form, as delineated in \eqref{chance_con}. In step $(b)$, we truncate the relevant vector $\mathbf{w}_k$ and consider the worst-case regarding the first $t$ disturbance realizations. In step $(c)$, we introduce the dual variable $y_{tj}$ and employ the formulated dual problem. In step $(d)$, the minimization on the left-hand side of the inequality is interpreted as the existence of a feasible solution.
\end{pf}

\subsection{Proof of \texorpdfstring{Equation~\eqref{upper_b}}{Equation (upper\_b)}}
\begin{pf} Next, we focus on the conditional expectation of the difference between the optimal objective at $x_{k+1}$ and the optimal objective at $x_k$, given $x_k$. This can be reformulated as follows:
\begin{equation} 
\begin{aligned}
 &\mathbb{E} \left [ J^\ast(x_{k+1})- J^\ast(x_{k})| x_k\right]    \\
\leq &\mathbb{E}  [ \tilde{J}(x_{k+1})- J^\ast(x_{k})| x_k] \\
= & \mathbb{E}  [ \left \| (A+BK_f)x_{k+N_h}+w_{k+N_h} \right \|_P^2+ \left \| x_{k+N_h} \right \|_Q^2\\
 &  +\left \| K_fx_{k+N_h} \right \|_R^2-\left \| x_{k} \right \|_Q^2-\left \| u_{k} \right \|_R^2-\left \| x_{k+N_h} \right \|_P^2 | x_k]\\
=& \mathbb{E}  [  \left \| (A+BK_f)x_{k+N}\right \|_P^2+\left \| w_{k+N_h} \right \|_P^2+\left \| x_{k+N_h} \right \|_Q^2\\
 &  +\left \| K_fx_{k+N_h} \right \|_R^2-\left \| x_{k} \right \|_Q^2-\left \| u_{k} \right \|_R^2-\left \| x_{k+N_h} \right \|_P^2 | x_k]\\
=& \mathbb{E}  [ \left \| w_{k+N_h} \right \|_P^2 -\left \| x_{k} \right \|_Q^2-\left \| u_{k} \right \|_R^2| x_k]\\
= &  \mathbb{E}  [  \mathrm {tr}( w_{k+N_h}w_{k+N_h}^{\mathrm {T} } P ) ]-     \left \| x_{k} \right \|_Q^2-\left \| u_{k} \right \|_R^2\\
=&\mathrm {tr}(\Sigma_w P )-\left \| x_{k} \right \|_Q^2-\left \| u_{k} \right \|_R^2.
\end{aligned}
    \end{equation}  
The second equality arises from the independence of $x_{k+N_h}$ and $w_{k+N_h}$, while the third equality follows the definition from the definition of the Lyapunov equation. In the final equation, $\Sigma_w $ represents the sample covariance of $w$.
\end{pf}

\subsection{Proof of \texorpdfstring{Equation~\eqref{exp_val}}{Equation (exp\_val)}}
\begin{pf} 
The equality concerning the expectation in \eqref{exp_val} can be derived through the following steps:
    \begin{equation} 
    \begin{aligned}
 &\mathbb{E} \left [ J^\ast(x_{k+1})| x_0\right]    \\
=&\int_{\mathbb{R}^{n_x} } J^\ast(x_{k+1}) p (x_{k+1}|x_0)\mathrm{d} x_{k+1}\\
=&\int_{\mathbb{R}^{n_x} } \int_{\mathbb{R}^{n_x} }J^\ast(x_{k+1}) p (x_{k+1},x_{k}|x_0)\mathrm{d} x_{k}\mathrm{d} x_{k+1}\\
=&\int_{\mathbb{R}^{n_x} }  \int_{\mathbb{R}^{n_x} }J^\ast(x_{k+1}) p (x_{k+1}|x_{k})p(x_{k}|x_0)\mathrm{d} x_{k}\mathrm{d} x_{k+1}\\
=&\int_{\mathbb{R}^{n_x} } \mathbb{E} \left [ J^\ast(x_{k+1})| x_k\right]  p(x_{k}|x_0)\mathrm{d} x_{k}\\
=& \mathbb{E} \left [ \mathbb{E} \left [ J^\ast(x_{k+1})| x_k\right] | x_0 \right]  ,
\end{aligned}
\end{equation}
where $p$ denotes a probability density function, and the third equation indicates the Markov property of the system.   \hfill $\square$
\end{pf}

\end{document}